
\documentclass[conference,letterpaper]{IEEEtran}

\addtolength{\topmargin}{9mm}

%
%
\usepackage[utf8]{inputenc} 
\usepackage[T1]{fontenc}
\usepackage{url}
\usepackage{ifthen}
\usepackage{cite}

\usepackage[cmex10]{amsmath} 
\usepackage{cleveref}


\interdisplaylinepenalty=2500 

\hyphenation{op-tical net-works semi-conduc-tor}


\usepackage{amsmath,amsfonts,bm}









\def\eqref#1{equation~\ref{#1}}









\def\1{\bm{1}}




\def\rvx{{\mathbf{x}}}
\def\rvy{{\mathbf{y}}}







\DeclareMathAlphabet{\mathsfit}{\encodingdefault}{\sfdefault}{m}{sl}
\SetMathAlphabet{\mathsfit}{bold}{\encodingdefault}{\sfdefault}{bx}{n}











\newcommand{\KL}{D_{\mathrm{KL}}}
\newcommand{\Var}{\mathrm{Var}}



\usepackage{amsthm}
\usepackage{xcolor}
\usepackage{mathtools}
\usepackage{pifont}
\usepackage[OT2,T1]{fontenc}

\usepackage{thmtools}
\usepackage{thm-restate}

\newtheorem{theorem}{Theorem}[section]

\newtheorem{definition}{Definition}[section]
\newtheorem{lemma}[theorem]{Lemma}
\newtheorem{corollary}{Corollary}[section]


\DeclarePairedDelimiterX{\infdivx}[2]{[}{]}{%
  #1\;\delimsize\|\;#2%
}
\DeclarePairedDelimiterX{\inftvd}[2]{(}{)}{%
  #1\;\delimsize\|\;#2%
}
\DeclarePairedDelimiterX{\infdivcolon}[2]{[}{]}{%
  #1\;\delimsize;\;#2%
}
\DeclarePairedDelimiterX{\infdivent}[1]{[}{]}{%
  #1
}

\newcommand{\Ent}{\mathbb{H}\infdivent}
\newcommand{\Ente}{\mathbb{H}_e\infdivent}
\newcommand{\DiffEnt}{h\infdivent}
\newcommand{\MI}{I\infdivcolon}
\newcommand{\MIe}{I_e\infdivcolon}
\newcommand{\KLD}{\KL\infdivx}
\newcommand{\CSD}{D_{CS}\infdivx}

\DeclarePairedDelimiter{\abs}{\lvert}{\rvert}
\DeclarePairedDelimiterX{\innerProd}[2]{\langle}{\rangle}{%
    #1,#2%
}

\def\Normal{\mathcal{N}}

\renewcommand{\Var}{\mathbb{V}}
\newcommand{\Reals}{\mathbb{R}}
\newcommand{\nonnegReals}{\Reals_{\geq 0}}

\newcommand{\Nats}{\mathbb{N}}

\newcommand{\Prob}{\mathbb{P}}

\newcommand{\Ind}{\mathbf{1}}

\newcommand{\Exp}{\mathbb{E}}

\DeclareMathOperator{\pushfwd}{\sharp}

\newcommand{\BFMSpace}{\mathcal{M}}

\newcommand{\YSpace}{\mathcal{Y}}
\newcommand{\ZSpace}{\mathcal{Z}}

\newcommand{\PMSpace}{\mathfrak{P}}
\newcommand{\Palm}{\mathcal{P}}

\newcommand{\supp}{\mathrm{supp}}

\DeclareSymbolFont{cyrletters}{OT2}{wncyr}{m}{n}
\DeclareMathSymbol{\Sha}{\mathalpha}{cyrletters}{"58}
\DeclareMathSymbol{\sha}{\mathalpha}{cyrletters}{"78}

\DeclareMathOperator{\lb}{\mathrm{lb}}

\newcommand{\ind}{\mathrel{\perp\!\!\!\perp}}
\def\Var{{\textrm{Var}}\,}

\def\gam{\ensuremath{\ln\frac{dP_{X|Y}}{dP_X}}}
\def\gamyn{\ensuremath{\frac{1}{n}\sum\limits_{i=1}^n\gam(X_i|y_i)}}

\def\empLam{\ensuremath{\Lambda}_{X}}
\def\muthird{\ensuremath{\mu_n^{(3)}}}

\newcommand\blfootnote[1]{%
  \begingroup
  \renewcommand\thefootnote{}\footnote{#1}%
  \addtocounter{footnote}{-1}%
  \endgroup
}

\begin{document}
\title{The Redundancy of Non-Singular Channel Simulation} 


\author{%
  \IEEEauthorblockN{Gergely Flamich}
  \IEEEauthorblockA{Department of Engineering \\
                    University of Cambridge\\
                    Cambridge, UK\\
                    Email: gf332@cam.ac.uk}
  \and
  \IEEEauthorblockN{Sharang M.~Sriramu and Aaron B.~Wagner}
  \IEEEauthorblockA{School of Electrical and Computer Engineering\\
  Cornell University\\
  Ithaca, NY 14853 USA\\
  Email: {sms579, wagner}@cornell.edu}
}

\maketitle


\begin{abstract}
Channel simulation is an alternative to quantization and entropy coding for performing lossy source coding.
Recently, channel simulation has gained significant traction in both the machine learning and information theory communities, as it integrates better with machine learning-based data compression algorithms and has better rate-distortion-perception properties than quantization.
As the practical importance of channel simulation increases, it is vital to understand its fundamental limitations.
Recently, Sriramu and Wagner \cite{sriramu2024optimal} have provided an almost-complete characterization of the redundancy of channel simulation algorithms.
In this paper, we complete this characterisation.
First, we significantly extend a result of Li and El Gamal \cite{li2018strong} and show that the redundancy of any instance of a channel simulation problem is lower bounded by the channel simulation divergence.
Second, we give two proofs that the asymptotic redundancy of simulating i.i.d. non-singular channels is lower-bounded by $1/2$: a direct approach based on the asymptotic expansion of the channel simulation divergence and one using large deviations theory.
\end{abstract}
\section{Introduction}
Channel simulation provides an alternative to quantization for lossy source coding by changing the mechanism for discarding information: it randomly perturbs its input.
Formally, let Alice and Bob be two communicating parties sharing a source of infinite common randomness $Z$. 
Furthermore, let $X, Y \sim P_{X, Y}$ be dependent random variables. 
Then, channel simulation for the channel $X \to Y$ requires that given a source sample $X \sim P_X$ Alice sends some bits to Bob so that he can simulate a single $P_{Y \mid X}$-distributed sample.
\par
What is the minimum number of bits Alice needs to communicate to Bob to achieve this?
Indeed, it is not hard to show that the communication cost must be lower bounded by $\MI{\rvx}{\rvy}$ on average \cite{li2024channel}.
This question was investigated from different angles by Li and El Gamal \cite{li2018strong}, Sriramu and Wagner \cite{sriramu2024optimal}, and by Goc and Flamich \cite{goc2024channel}.
First, Li and El Gamal show for discrete channels the redundancy of channel simulation is at least as big as a new quantity they define, called the excess functional information.
However, notably their method necessarily depends on $\rvy$ being discrete, which significantly limits their analysis.
More recently, Sriramu and Wagner considered the redundancy of large-blocklength channel simulation for i.i.d.\ distributions. 
That is, they considered dependent random variables of the form $X^n, Y^n \sim P_{X^n, Y^n} = P_{X, Y}^{\times n}$ for each $n \in \Nats$ and considered the normalized redundancy of channel simulation algorithms for the channel $X^n \to Y^n$ as $n \to \infty$.
Their analysis reveals that whether a channel is singular (see \cref{sec:preliminaries} for the definition) or not has a significant bearing on the result.
Indeed, for singular channels, they show that the normalized redundancy is \emph{at most} $o\left(\frac{\lb n}{n}\right)$.
On the other hand, for non-singular channels, they show that the normalized redundancy is \emph{at most} $\frac{1}{2n}\lb n$ bits, and show that this is tight for certain discrete channels.
However, they leave open the case for non-discrete, non-singular channels.
\par
Finally, Goc and Flamich approach the problem by restricting the solution class instead of the problem class by investigating the channel simulation efficiency of \emph{causal rejection samplers} and show a tighter one-shot lower bound using the channel simulation divergence (\cref{def:csd_def}).
\par
This paper significantly extends all three of these approaches.
Concretely, our contributions are:
\begin{enumerate}
\item For an arbitrary pair of Polish random variables $X,Y$ with the joint distribution $P_{X, Y}$, we provide a tight one-shot lower bound using the channel simulation divergence for the average communication cost of simulating the channel $P_{Y|X}$.
\item For the problem, of simulating $n$ i.i.d. copies of a non-singular channel $P_{Y|X}$ for the i.i.d. source $X^n \sim P_X^{\times n}$, we show that the optimal second-order redundancy is $\frac{\lb n}{2n}$. We provide two proofs for this; the first uses the one-shot result described earlier and the other is a standalone argument based on large deviations techniques.\blfootnote{This research was supported by the US National Science Foundation under grant CCF-2306278 and by a gift from Google.}  
\end{enumerate}


The appendix with our additional calculations for the proofs is available on the arXiv \cite{flamich2025two}.
\section{Preliminaries}
\label{sec:preliminaries}
\textit{Notation.}
We denote the binary logarithm by $\lb$ and the natural logarithm by $\ln$.
We denote random variables by capital Roman letters, e.g.\ $X, Y$.
We denote the expectation operator by $\Exp$, the indicator function by $\Ind[\cdot]$ and for an event $\mathcal{A}$ we use the standard notation $\Prob[\mathcal{A}] = \Exp[\Ind[X \in \mathcal{A}]]$.
We denote the Shannon entropy of a discrete random variable with probability mass function $f$ by $\Ent{X} = \Exp[-\lb f(X)]$, the differential entropy of a continuous variable that admits a density $f$ by $\DiffEnt{X} = \Exp[-\lb f(X)]$, and the mutual information between two random variables by $\MI{X}{Y}$.
We use a subscript $e$ (e.g.,  $\Ente{\cdot}$, $\MIe{\cdot}{\cdot}$) to denote information-theoretic quantities measured in nats, as opposed to bits.
\par
\textit{Problem Setup.}
Let $\mathcal{X}$ and $\mathcal{Y}$ be Polish spaces.
Let $P_X$ be a probability measure on $\mathcal{X}$ and $P_{Y|X}: \mathcal{X} \to \mathcal{Y}$ be a Markov kernel (see \cite{regcondprob}). We will assume that $\MI{X}{Y} > 0$, which guarantees that the likelihood ratio $\frac{dP_{X|Y}}{dP_X}$ is well defined (see Lemma 3.3. \cite{YWu}). We will further assume that variance of the log-likelihood ratio is bounded --- $\Var\left[ \gam \right] < \infty$.
\par
For any block length $n>0$ and an i.i.d. source $X^n \sim P_X^{\times n}$, the tuple $(f_n,g_n,Z)$ --- where $Z \ind X$ taking values on some set $\mathcal{Z}$ is the \emph{common randomness}, $f_n: \mathcal{X}^n \times \mathcal{Z} \mapsto \{0,1\}^*$ is a \emph{prefix-free} encoder and $g_n : \{0,1\}^* \times \mathcal{Z} \mapsto \mathcal{Y}^n$ is a decoder --- simulates the channel $P_{Y|X}^{\times n}$ if $(X^n, g_n(f_n(X^n,Z),Z))$ is distributed i.i.d. $P_{XY}^{\times n}$.
The \emph{rate} of the scheme is the average length of the bit string produced by $f_n$, normalized by $n$.
\par
We are interested in characterizing the minimum average rate $R_n$, defined as the infimum of the rates across all schemes that simulate $P_{Y|X}^{\times n}$. 
Sriramu and Wagner \cite{sriramu2024optimal} show that $\lim_{n \rightarrow \infty} \frac{ R_n - \MI{X}{Y}}{\lb n/n} \le \frac{1}{2}$ for \emph{non-singular} $P_{XY}$ (Definition 4, \cite{sriramu2024optimal}) ---
i.e., where it \textbf{does not} hold $P_Y$-almost surely that 
\begin{align*}
 \frac{dP_{X \mid Y}}{dP_X}(\cdot \mid Y) = \Exp\left[\frac{dP_{X \mid Y}}{dP_X}(X \mid Y) \,\,\middle\vert\,\,Y\right]   
\end{align*} 
This paper provides a matching lower bound.
\section{One-shot lower bound}
\begin{definition}[Width function, Channel simulation divergence \cite{goc2024channel}]
\label{def:csd_def}
Let $Q \ll P$ be probability measures over the same measurable space with Radon-Nikodym derivative $r = dQ/dP$.
Then, the $P$- and $Q$-width functions of $r$ are
\begin{align*}
w_P(h) = \Prob_{X \sim P}[r(X) \geq h],  \quad 
w_Q(h) = \Prob_{X \sim Q}[r(X) \geq h],
\end{align*}
respectively.
Then, we define the channel simulation divergence of $Q$ from $P$ by
\begin{align*}
\CSD{Q}{P} = -\int_0^\infty w_P(h) \lb w_P(h) \, dh.
\end{align*}
\end{definition}
Now, we have the following result.
\begin{theorem}[A one-shot lower bound on the efficiency of channel simulation]
\label{thm:one_shot_bound}
Let $X, Y \sim P_{X, Y}$ be a pair of dependent, Polish random variables.
Then, for any common randomness $Z$ that admits a functional representation of $Y$, i.e., such that $Z \perp X$ and $Y = g(X, Z)$ for some measurable function $g$:
\begin{align}
\label{eq:one_shot_channel_simulation_lower_bound}
\Exp_{Y \sim P_Y}[\CSD{P_{X \mid Y}}{P_X}] \leq \Ent{Y \mid Z}.
\end{align}
\end{theorem}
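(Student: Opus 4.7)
The plan is to express both sides of the inequality in terms of the conditional likelihood $q(y \mid x) := P_{Y \mid X=x}(y)$ and then apply a rearrangement / concavity argument. First, I would use Bayes' rule to write the Radon--Nikodym derivative as $r_y(x) = q(y \mid x) / P_Y(y)$. Substituting into the CSD definition and performing the change of variables $\alpha = h\, P_Y(y)$, the factor $P_Y(y)$ cancels after taking the expectation over $Y$, and (restricting to the non-trivial case of discrete $Y$) I obtain
\[
\Exp_Y\bigl[\CSD{P_{X \mid Y}}{P_X}\bigr] \;=\; \sum_y \int_0^\infty -\tilde{w}(\alpha;y)\, \lb \tilde{w}(\alpha;y)\, d\alpha,
\]
where $\tilde{w}(\alpha;y) := \Prob_{X \sim P_X}[q(y \mid X) \geq \alpha]$.

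Next, I would exploit the functional representation via the natural duality $q(y \mid x) = \Prob_Z[g(x,Z) = y]$ and $p(y,z) := P_X(A_y^z) = \Prob_X[g(X,z)=y]$, where $A_y^z := g^{-1}(y,z)$. Fubini's theorem gives the matching total masses $\int q(y \mid x)\, dP_X(x) = \int p(y,z)\, dP_Z(z) = P_Y(y)$, and the right-hand side of the theorem is $\Ent{Y \mid Z} = -\Exp_Z\bigl[\sum_y p(y,Z)\,\lb p(y,Z)\bigr]$. The theorem thus reduces to the comparison
\[
\sum_y \int_0^\infty -\tilde{w}(\alpha;y)\,\lb \tilde{w}(\alpha;y)\, d\alpha \;\leq\; -\Exp_Z\Bigl[\sum_y p(y,Z)\, \lb p(y,Z)\Bigr].
\]

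The main obstacle is proving this reduced inequality. The left-hand side is, morally, the entropy of the decreasing rearrangement of $q(y \mid \cdot)$ against Lebesgue measure on the $\alpha$-axis, while the right-hand side is the entropy of the partitions $\{A_y^z\}$ averaged over $z$. I expect the proof to combine the concavity of $t \mapsto -t \lb t$ with a Hardy--Littlewood-type majorization, exploiting that the decreasing rearrangement minimises the entropy functional among non-negative functions sharing the prescribed marginals. Technical care will be needed to pass from finite / discrete $\mathcal{X}, \mathcal{Z}$ (where a Schur-convexity argument is cleanest) to the general Polish setting by approximation with finite quantisations, and to handle edge cases such as singular components of $P_{X \mid Y}$ or continuous marginals $Y$ with discrete $Y \mid Z$.
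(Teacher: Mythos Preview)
Your reduction is correct and, after unwinding notation, coincides with the paper's: both sides can be written as a sum over $y$, and the $y$-pointwise inequality you need,
\[
\int_0^\infty -\tilde w(\alpha;y)\lb \tilde w(\alpha;y)\,d\alpha \;\leq\; \Exp_Z\bigl[-p(y,Z)\lb p(y,Z)\bigr],
\]
is exactly the paper's inequality $\CSD{P_{X\mid Y=y}}{P_X}\le \Exp_{\pi\mid y}[-\lb\pi(y)]$ once you divide by $P_Y(y)$ and recognise the right-hand side as the Palm-conditioned expectation.

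The gap is in the ``main obstacle'' paragraph. Your proposed mechanism --- ``the decreasing rearrangement minimises the entropy functional among non-negative functions sharing the prescribed marginals'' --- is not a correct general principle, and in any case $\tilde w(\cdot;y)$ is not the decreasing rearrangement of $p(y,\cdot)$ (take $A_y^{z_1}=[0,\tfrac12]$, $A_y^{z_2}=[\tfrac12,1]$: then $p\equiv\tfrac12$ while $\tilde w=\Ind[\alpha\le\tfrac12]$). What is actually needed is a \emph{second-order stochastic dominance} statement: with $\varphi(t)=-t\lb t$, you must show that the distribution of $p(y,Z)$ conditioned on $Y=y$ is dominated in the convex order by a specific distribution built from the width function $w_y$. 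The paper proves this (its equation~(7)) via a short but non-obvious chain: write $(u-\pi(y))_+$ as an expectation over $X$, pull the $(\cdot)_+$ inside by Jensen, use that for an indicator $\Ind[\rho_y\ge c]-\pi$ the positive part simplifies to $\Ind[\rho_y\ge c](1-\pi)$, and then apply Bayes' rule to get the key identity $\Exp_{\pi\mid y}[\pi(y\mid X)]=\mu_y\cdot r_y(X)$. None of these steps is a standard Hardy--Littlewood or Schur-convexity move; the argument genuinely exploits that $q(y\mid x)$ and $p(y,z)$ arise as the two marginals of the \emph{same indicator} $\Ind[g(x,z)=y]$, and that the Palm conditioning tilts $P_Z$ by exactly $p(y,\cdot)$.

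A secondary issue: your plan to pass to general Polish $Y$ by ``approximation with finite quantisations'' is delicate because $\Ent{Y\mid Z}$ is only finite when $P_{Y\mid Z}$ is almost surely purely atomic, and quantising $Y$ destroys the functional representation $Y=g(X,Z)$. The paper handles this directly by interpreting $\Ent{Y\mid Z}$ through local Palm distributions on the space of atomic probability measures, which lets the pointwise argument go through without any discreteness assumption on $\mathcal Y$.
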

Observe, that  \Cref{thm:one_shot_bound} is equivalent to stating a lower bound on Li and El Gamal's excess functional information.
Indeed, as a calculation in Appendix B of \cite{goc2024channel} shows, the above result is a broad generalisation of Proposition 1 of Li and El Gamal \cite{li2018strong}, which only permits discrete $Y$.
The generalised proof's main ingredient is establishing the mathematical machinery to handle the conditional Shannon entropy $\Ent{Y \mid Z}$ when $Y$ is not discrete.
The rest of the proof follows by generalising the stochastic dominance argument in \cite{li2018strong}. 
\begin{proof}
We begin with a generalised definition of conditional entropy to make sense of the right-hand side of \cref{eq:one_shot_channel_simulation_lower_bound}.
First, let $\YSpace$ be a Polish space, denoting the range of $Y$.
Then, define
\begin{align*}
\PMSpace_\YSpace = \{\pi \in \BFMSpace_\YSpace^\sharp \mid \pi(\YSpace) = 1, \pi \text{ purely atomic}\},
\end{align*}
where $\BFMSpace_\YSpace^\sharp$ denotes the space of boundedly finite measures over $\YSpace$ \cite{daley2007introduction}.
In plain words, $\PMSpace_\YSpace$ denotes the set of all discrete probability distributions we can construct by first taking a countable subset of $\YSpace$ and assigning an appropriate probability mass to each point.
Let $g_z(x) = g(x, z)$ and let $P_{Y \mid Z = z} = g_z\pushfwd P_X$ denote the regular conditional probability of $Y$ conditioned on $\{Z = z\}$.
We may assume that $P_{Y \mid Z}$ is purely atomic $P_Z$-almost surely since otherwise the right-hand side of \cref{eq:one_shot_channel_simulation_lower_bound} is infinite and the statement of the theorem is vacuously true.
Note that there is always at least one such purely atomic Markov kernel: the Poisson functional representation \cite{li2018strong} and greedy Poisson rejection sampling \cite{flamich2023gprs} are explicit, constructive examples of such $Z$ and $g$.
Let $\ZSpace$ denote the range of $Z$ and consider the map $f: \ZSpace \to \PMSpace_\YSpace$ defined by $f(z) = P_{Y \mid Z = z}$.
Observe, that $f$ is measurable by the regularity of the conditional distribution, and hence $\Palm = f \pushfwd P_Z$ defines a measure over $\PMSpace_Y$.
From this, we also have from the law of the unconscious statistician for $A \subseteq \YSpace$:
\begin{align}
\label{eq:palm_kernel_mean_measure}
P_Y(A) = \Exp_Z[\Prob[Y \in A \mid Z]] = \int_{\PMSpace_\YSpace} \pi(A) \, d\Palm(\pi).
\end{align}
Hence, we have
\begin{align*}
\Ent{Y \mid Z} &= \int_\ZSpace \int_{\YSpace} -\lb \Prob[Y = y \mid Z = z] \, dP_{Y \mid Z}(y) \, dP_Z \\
&=\int_{\PMSpace_Y} \int_Y -\lb \pi(y) \, d\pi(y) \, d\Palm(\pi).
\end{align*}
The advantage of the latter representation is that we may now apply to it Propositions 13.1.IV and 13.1.V of \cite{daley2007introduction}.
Indeed, Proposition 13.1.IV ensures that there exists a family of measures $\{\Palm_y \mid y \in \YSpace\}$ called the local Palm distributions, that allow us to ``exchange the integrals'' and obtain
\begin{align}
\label{eq:one_shot_bound_cond_ent_exchanged_integrals}
\Ent{Y \mid Z} = \int_\YSpace \int_{\PMSpace_\YSpace} -\lb \pi(y) \, d\Palm_y(\pi)\, dP_Y(y)
\end{align}
Furthermore, Proposition 13.1.V guarantees that $\Palm_y$ is supported on the set of all purely atomic probability measures on $\YSpace$ conditioned on the event that $y$ is in their support:
\begin{align*}
\forall y \in \YSpace: \supp \Palm_y = \{\pi \in \supp \Palm \mid \pi(y) > 0\}
\end{align*}
\par
Given \cref{eq:one_shot_bound_cond_ent_exchanged_integrals}, it is now sufficient to show that $P_Y$-almost surely we have
\begin{align}
\label{eq:one_shot_bound_one_shot_ineq}
\CSD{P_{X \mid Y = y}}{P_X} \leq \int_{\PMSpace_\YSpace} -\lb \pi(y) \, d\Palm_y(\pi)
\end{align}
Then, taking expectation over $Y$ then yields the desired result.
To show \cref{eq:one_shot_bound_one_shot_ineq}, let us first define some convenient notation.
For $y \in \YSpace, X \sim P_X$ and an arbitrary measurable function $\gamma$, let ${\rho_y = \frac{dP_{Y \mid X}}{dP_Y}(X \mid y)}$, let $w_y(h) = \Prob[\rho_y \geq h]$ and let ${\Exp_{\pi \mid y}[\gamma(y, \pi)] = \int_{\PMSpace_\YSpace} \gamma(y, \pi) d\Palm_y(\pi)}$.
Then, \cref{eq:one_shot_bound_one_shot_ineq} will follow from three steps:
\begin{align}
\Exp_{\pi \mid y}[-\!\lb \pi(y)] \!&=\! \Exp_{\pi \mid y}[\pi(y)]  \Exp_{\pi \mid y}[-\!\lb \pi(y)] \Big/\Exp_{\pi \mid y}[ \pi(y)] \label{eq:one_shot_bound_mult_by_1} \\
&\geq \Exp_{\pi \mid y}[-\pi(y) \lb \pi(y)] \Big/\Exp_{\pi \mid y}[ \pi(y)] \label{eq:one_shot_bound_jensen}\\
&\geq \CSD{P_{X \mid Y = y}}{P_X}. \label{eq:one_shot_bound_csd_ineq}
\end{align}
\Cref{eq:one_shot_bound_mult_by_1} follows by multiplying by $1$.
Note that this is never troublesome since we are conditioning on the event that $\pi(y) > 0$.
Second, let $\varphi(x) = -x \lb x$; then \cref{eq:one_shot_bound_jensen} follows from Jensen's inequality applied to $\varphi$.
Finally, \cref{eq:one_shot_bound_csd_ineq} follows from a generalisation of Li and El Gamal's second-order stochastic dominance argument in the proof of Proposition 1 in \cite{li2018strong}, which we present next.
\par
First, let $\mu_y = \Exp_{\pi \mid y}[\pi(y)]$.
Then, in \cref{sec:additional_computations}, following the technique of Li and El Gamal mutatis mutandis, we show the second-order dominance type result
\begin{align}
\label{eq:one_shot_bound_second_order_dominance}
\int_0^u \Prob_{\pi \mid y}[\pi(y) \leq v] \, dv \leq u - \mu_y \int_0^u w_y^{-1}(p) \, dp.
\end{align}
Now, we have
\begin{align}
\Exp&_{\pi \mid y}[\varphi(\pi(y))] \nonumber \\
&= \lb(e)\left(1-\mu_y - \int_0^1 \int_0^u \frac{\Prob_{\pi \mid y}[\pi(y) \leq v]}{u} \, dv \, du\right) \label{eq:one_shot_bound_ibp_twice_1} \\
&\geq \mu_y \lb(e)\left(-1 + \int_0^1 \int_0^u \frac{w_y^{-1}(p)}{u} \, dp \, du\right) \tag{by \cref{eq:one_shot_bound_second_order_dominance}} \\
&=\mu_y \CSD{P_{X \mid Y =y}}{P_X} \label{eq:one_shot_bound_ibp_twice_2},
\end{align}
where \cref{eq:one_shot_bound_ibp_twice_1,eq:one_shot_bound_ibp_twice_2} both follow from integrating by parts twice (see \cref{sec:additional_computations} for the precise computation); finishing the proof.
\end{proof}
Note, that $\KLD{Q}{P} \leq \CSD{Q}{P}$ by Lemma IV.1.IV of \cite{goc2024channel}.
Therefore, $\MI{X}{Y} \leq \Exp_Y[\CSD{P_{X \mid Y}}{P_X}]$, meaning that \cref{thm:one_shot_bound} provides a tighter lower bound than what was previously known.
Furthermore, Theorem V.2 of \cite{goc2024channel} shows that the lower bound is tight up to a small constant.
In the next section, we show that \cref{thm:one_shot_bound} can be further simplified in the asymptotic case to characterise the redundancy of channel simulation fully. 
\section{Proof of the Asymptotic Result Using the Channel Simulation Divergence}
\label{sec:csd_asympt_result}
\par
This section proves a converse result to Theorem 1 of \cite{sriramu2024optimal} in the non-singular case based on \cref{thm:one_shot_bound}.
In \cref{sec:large_deviation_proof}, we provide a second, direct proof using large deviation theory.
Before we state the main result of the section, we show the following result.
\begin{lemma}
\label{lemma:divergence_difference}
Let $Q \ll P$ be probability distributions with $r = dQ/dP$ and $w_P(h) = \Prob_{X \sim P}[r(X) \geq h]$.
Then, $w_P$ defines a probability density over $\nonnegReals$.
Let $H$ be a random variable whose density is $w_P$.
Then, we have
\begin{align*}
\CSD{Q}{P} - \KLD{Q}{P} = \DiffEnt{\ln H} - \lb(e).
\end{align*}
\end{lemma}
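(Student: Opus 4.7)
The plan is to recognize both $\CSD{Q}{P}$ and $\DiffEnt{\ln H}$ as differential entropies of two related random variables and then reduce the problem to computing a single expectation, $\Exp[\ln H]$, which I expect will essentially equal $\KLD{Q}{P}$ (in nats) up to an additive constant.

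First, to confirm that $w_P$ is a probability density on $\nonnegReals$, I would apply the layer-cake identity:
\begin{align*}
\int_0^\infty w_P(h)\, dh = \int_0^\infty \Prob_{X \sim P}[r(X) \geq h]\, dh = \Exp_P[r(X)] = 1,
\end{align*}
since $r = dQ/dP$ and $Q$ is a probability measure. By the very definition in \cref{def:csd_def}, $\CSD{Q}{P}$ is nothing other than $\DiffEnt{H}$, the differential entropy in bits of $H$.

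Next I would perform the change of variables $U = \ln H$. Since $H$ has density $w_P$, the random variable $U$ has density $f_U(u) = e^u w_P(e^u)$. Substituting $h = e^u$ in the definition of differential entropy gives
\begin{align*}
\DiffEnt{\ln H} &= -\int_{-\infty}^\infty f_U(u) \lb f_U(u)\, du \\
&= -\int_0^\infty w_P(h) \lb w_P(h)\, dh - \lb(e)\int_0^\infty w_P(h) \ln(h)\, dh \\
&= \DiffEnt{H} - \lb(e)\,\Exp[\ln H].
\end{align*}

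The remaining task is to compute $\Exp[\ln H]$. Here I would invoke the layer-cake identity once more, this time applied to the function $F(h) = h \ln h - h$ (with $F(0) = 0$), whose derivative is $F'(h) = \ln h$. Writing $F(r(X)) = \int_0^{r(X)} F'(h)\, dh$ and applying Fubini — which is justified by the finite variance assumption on the log-likelihood ratio — yields
\begin{align*}
\Exp[\ln H] &= \int_0^\infty \ln(h)\, w_P(h)\, dh = \Exp_P\!\left[F(r(X))\right] \\
&= \Exp_P[r(X)\ln r(X)] - \Exp_P[r(X)] = \KLD{Q}{P}_e - 1,
\end{align*}
where the subscript $e$ denotes nats. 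Combining the three ingredients, and converting $\lb(e)\KLD{Q}{P}_e = \KLD{Q}{P}$, gives
\begin{align*}
\DiffEnt{\ln H} = \CSD{Q}{P} - \KLD{Q}{P} + \lb(e),
\end{align*}
which rearranges to the claimed identity. The only delicate point I anticipate is justifying the Fubini exchange in the layer-cake step for $F(h) = h\ln h - h$, since this integrand is not sign-definite; the assumption $\Var[\ln(dP_{X \mid Y}/dP_X)] < \infty$ from \cref{sec:preliminaries} (together with $\Exp_P[r(X)] = 1$) should supply the requisite integrability.
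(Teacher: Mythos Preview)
Your proof is correct and follows essentially the same route as the paper: both arguments identify the density of $\ln H$ as $e^u w_P(e^u)$, perform the change of variables $h = e^u$, and rely on the integral representation $\int_0^\infty w_P(h)\lb(h)\,dh = \KLD{Q}{P} - \lb(e)$. The only substantive difference is that the paper simply cites this last identity (Equation~5 of \cite{goc2024channel}), whereas you derive it from scratch via the layer-cake formula applied to $F(h)=h\ln h - h$; your version is therefore more self-contained. One small caveat: your Fubini justification appeals to the variance assumption from \cref{sec:preliminaries}, but \cref{lemma:divergence_difference} is stated for arbitrary $Q\ll P$, so the cleaner hypothesis is simply $\KLD{Q}{P}<\infty$ (equivalently $\Exp_Q[|\ln r|]<\infty$), which is exactly what is needed to make $\Exp_P[r(X)|\ln r(X)|]<\infty$ and hence to justify the exchange.
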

\begin{proof}
For the first part, note that $w_P$ is non-negative and integrates to one:
$\int_0^\infty w_P(h) \, dh = \Exp_{X \sim P}[r(X)] = 1$.
The second part follows by direct calculation, see \cref{sec:additional_computations}.
\end{proof}
Now, we state the main result of the section.
\begin{theorem}[Asymptotic Redundancy of Non-singular Channels]
\label{thm:asymptotic_result_thm_1}
Let $X, Y \sim P_{X, Y}$ be dependent random variables.
Assume the channel $P_{X \mid Y}$ is almost surely non-singular and let $r_Y = dP_{X \mid Y}/{dP_X}$.
For all $n \geq 1$, let $X^n, Y^n \sim P_{X, Y}^{\times n}$ and assume that $\sigma^2 = \Var\left[\lb \frac{dP_{X, Y}}{P_X \times P_Y}(X, Y)\right] < \infty$.
Then, for any sequence of common randomness $Z_n \perp X^n$ and $Y^n = g(X^n, Z_n)$, we have
\begin{align*}
\lim_{n \to \infty}\frac{\Ent{Y^n \mid Z_n} - \MI{X^n}{Y^n}}{\lb(n)} \geq \frac{1}{2}.
\end{align*}
\end{theorem}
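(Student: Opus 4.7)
The plan is to chain \cref{thm:one_shot_bound} with \cref{lemma:divergence_difference} to reduce the theorem to a single asymptotic entropy estimate. Applying \cref{thm:one_shot_bound} at block length $n$ gives $\Ent{Y^n \mid Z_n} \geq \Exp_{Y^n}[\CSD{P_{X^n \mid Y^n}}{P_{X^n}}]$. For each realization $y^n$, \cref{lemma:divergence_difference} decomposes
\begin{equation*}
\CSD{P_{X^n \mid Y^n=y^n}}{P_{X^n}} = \KLD{P_{X^n \mid Y^n=y^n}}{P_{X^n}} + \DiffEnt{\ln H_{y^n}} - \lb(e),
\end{equation*}
where $H_{y^n}$ has density equal to the $P_{X^n}$-width function of $r_{y^n} := dP_{X^n \mid Y^n=y^n}/dP_{X^n}$. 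Averaging over $y^n$ converts the KL term into $\MI{X^n}{Y^n} = n\MI{X}{Y}$, so the theorem reduces to showing $\liminf_{n \to \infty} \Exp_{Y^n}[\DiffEnt{\ln H_{Y^n}}]/\lb n \geq 1/2$.

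\textbf{Size-biased product and EPI.} To estimate the resulting expected entropy, I would use the classical equilibrium identity that $V^* U$ has density $\Prob[V \geq h]$ whenever $V^*$ is the size-biased version of $V$ and $U \sim \Unif[0,1]$ is independent; applied conditionally on $Y^n$ this gives $\ln H_{Y^n} \disteq \ln V_n^{*,Y^n} - E$ with $E \sim \Exponential(1)$. Because the per-coordinate factors $V_i = r_{Y_i}(X_i)$ are conditionally independent with mean one under $P_{X^n}$, a characteristic-function check shows that size-biasing factors across the product; moreover, size-biasing $V_i$ tilts the $X_i$-measure from $P_X$ to $P_{X \mid Y_i}$, so $\ln V_i^* \disteq \ln r_{Y_i}(\tilde X_i)$ with $\tilde X_i \sim P_{X \mid Y_i}$ conditionally independent across $i$. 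Conditioning on $Y^n$ and applying the Entropy Power Inequality to the resulting sum of $n+1$ independent random variables gives
\begin{equation*}
e^{2\,\DiffEnt{\ln H_{Y^n} \mid Y^n}} \geq \sum_{i=1}^n e^{2\,\DiffEnt{\ln r_{Y_i}(\tilde X_i) \mid Y_i}} + e^{2\,\DiffEnt{-E}}.
\end{equation*}
I would then apply AM-GM to the $n$ like summands, take $\lb$, and average over $Y^n$ to obtain $\Exp_{Y^n}[\DiffEnt{\ln H_{Y^n}}] \geq \tfrac{1}{2}\lb n + \DiffEnt{\ln r_Y(\tilde X) \mid Y}$, which closes the proof whenever the residual conditional entropy on the right is finite.

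\textbf{Main obstacle.} The EPI step becomes vacuous when $X$ is discrete, since then $\ln V_i^* \mid Y_i$ is discrete and has differential entropy $-\infty$. My fix would be to exploit the convolution with the continuous factor $-E$ to smooth the discrete partial sum $\sum_i \ln V_i^*$ into a density, and then invoke a local CLT together with the finite-variance assumption $\sigma^2 < \infty$ to show that this density is pointwise comparable to a Gaussian of standard deviation $\Theta(\sqrt n)$ on its effective support. This would yield $\DiffEnt{\ln H_{Y^n} \mid Y^n = y^n} = \tfrac{1}{2}\lb n + O(1)$ for $P_{Y^n}$-typical $y^n$, with the non-singularity hypothesis keeping the CLT variance bounded away from zero. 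Making this estimate uniform enough to survive averaging over $Y^n$ is the main technical hurdle in this route; one can alternatively sidestep it by using the large-deviations proof of \cref{sec:large_deviation_proof}.
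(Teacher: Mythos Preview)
Your reduction via \cref{thm:one_shot_bound} and \cref{lemma:divergence_difference} is exactly what the paper does, and your size-biased identification $H_{y^n}\disteq V^{*}_{y^n}U$ with $V^{*}_{y^n}=\prod_i r_{y_i}(\tilde X_i)$, $\tilde X_i\sim P_{X\mid y_i}$, is a clean probabilistic derivation of the same fact the paper obtains through the CDF calculus in \cref{eq:asymptotic_result_proof_sequence_CDF_identity}--\cref{eq:first_asymp_proof_limit_identity}. Both routes land on $\ln H_{y^n}\disteq\sum_i\ln r_{y_i}(\tilde X_i)-E$; the divergence is only in how you extract $\tfrac12\lb n$ from this sum.

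Your EPI step is a genuinely different idea, and when it applies it is sharper than what the paper does: it gives the non-asymptotic lower bound $\DiffEnt{\ln H_{Y^n}\mid Y^n}\geq\tfrac12\lb n+\DiffEnt{\ln r_Y(\tilde X)\mid Y}$ for every $n$, with no CLT and no passage to the limit. The paper instead rescales to $n^{-1/2}(\ln H_n-\kappa^n)$, proves a Lindeberg--Feller CLT for the sum conditionally on $Y^n$, and argues that the differential entropy of the rescaled variable tends to that of the limiting Gaussian, so the residual term is $o(\lb n)$. The paper's route works uniformly over discrete, continuous, and mixed alphabets, whereas yours needs $\Exp_Y\bigl[\DiffEnt{\ln r_Y(\tilde X)}\bigr]>-\infty$, which is an assumption strictly stronger than $\sigma^2<\infty$ even in the continuous case and is vacuous in the discrete case, as you note.

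Your proposed repair --- smooth by the exponential, then run a local CLT to compare the density of $\ln H_{y^n}$ to a Gaussian of scale $\sqrt n$ --- is essentially the paper's argument carried out at the density level rather than the characteristic-function level. If you pursue it, the real work is not the local CLT itself but the integrability needed to pass from pointwise density comparison to an entropy lower bound that survives the expectation over $Y^n$; the paper faces the analogous issue (convergence in distribution does not automatically yield convergence of differential entropy) and handles it only sketchily, so a careful version of your local-CLT route would in fact strengthen the argument here.
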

\begin{proof}
To begin, \cref{thm:one_shot_bound} implies that
\begin{align*}
\Exp_{Y^n}[\CSD{P_{X^n \mid Y^n}}{P_{X^n}}] \leq \Ent{Y^n \mid Z_n}.
\end{align*}
Hence, our goal from now on will be to show that
\begin{align*}
\lim_{n \to \infty}\frac{\Exp_{Y^n}[\CSD{P_{X^n \mid Y^n}}{P_{X^n}}] - \MI{X^n}{Y^n}}{\lb(n)} = \frac{1}{2}
\end{align*}
For $n \geq 1$, let ${\kappa^n = \KLD{P_{X^n \mid Y^n}}{P_{X^n}} \cdot \ln 2}$ and let $H_n$ denote the random variable induced by $w_{P_{X^n}}$ from \cref{lemma:divergence_difference}.
Then, the lemma implies that the  above limit is equal to
\begin{align*}
\lim_{n \to \infty}\frac{\DiffEnt{\ln H_n \mid Y^n}}{\lb(n)} = \frac{1}{2} +\! \lim_{n \to \infty}\frac{\DiffEnt{n^{-1/2}(\ln H_n - \kappa^n) \mid Y^n}}{\lb(n)}
\end{align*}
where the above equality follows from the translation invariance and the scaling property of the differential entropy.
The rest of the proof shows that the entropy on the right-hand side converges to a constant, which means that the limit on the right-hand side vanishes, yielding the desired result.
\par
As the scaling and shift already suggest, we show that a central limit theorem holds for the sequence ${n^{-1/2}(\ln H_n - \kappa^n)}$, and it converges to a Gaussian. 
To this end, let ${s = n^{1/2}}, {b = \kappa^n}$ and $\Lambda_n(t) = w_{P_{X^n \mid Y^n}}(\exp(s \cdot t + b))$.
Then, in \cref{sec:additional_computations}, we show that
\begin{align}
\label{eq:asymptotic_result_proof_sequence_CDF_identity}
\Prob[s(\ln H_n - b) \leq t \mid Y^n] = 1 - \Lambda_n(t) + \int_{-\infty}^t\!\!\!\! e^{-s(\eta - t)} \, d\Lambda_n(\eta) 
\end{align}
Now, let us examine the behaviour of the two terms on the right-hand side.
First, note that by Fubini's theorem, we have
\begin{align*}
\lim_{n \to \infty} \int_{-\infty}^t\!\! e^{-n^{1/2}(\eta - t)} \, d\Lambda_n(\eta) =  \int_{-\infty}^t\!\! \Ind[\eta = t] \, d\Lambda_\infty(\eta) = 0
\end{align*}
where $\Lambda_\infty(\eta) = \lim_{n \to \infty}\Lambda_n(\eta)$.
Second, observe that
\begin{align*}
1 - \Lambda_n(t) = \Prob_{P_{X^n \mid Y^n}}\left[\sum_{i = 1}^n\frac{\ln r_Y(X_i \mid Y_i) - \kappa^n}{\sqrt{n}} < t \,\middle\vert\, Y^n \right]
\end{align*}
Hence, we have
\begin{align}
\label{eq:first_asymp_proof_limit_identity}
\lim_{n \to \infty} \frac{\ln H_n - \kappa^n}{\sqrt{n}} = \lim_{n \to \infty} \sum_{i = 1}^n\frac{\ln r_Y(X_i \mid Y_i) - \kappa^n}{\sqrt{n}}
\end{align}
We finish the proof by showing that the Lindeberg-Feller central limit theorem (Theorem 3.4.5, \cite{durrett2019probability}) holds for the right-hand side of the equation.
We show the following lemma in \cref{sec:additional_computations}.
\begin{lemma}
\label{lemma:lindeberg_cond}
Denote the conditional variance of $\ln r_Y$ as $\sigma^2_Y = \Var_{X \mid Y}[\ln r_Y(X)]$.
Since $P_{X \mid Y}$ is non-singular, $\sigma^2_Y > 0$ almost surely and is also finite by assumption.
Let $s_n = \sum_{k = 1}^n \sigma^2_{Y_k}$.
Furthermore, let $\kappa_Y = \Exp_{X \mid Y}[\ln r_Y(X)]$.
Then, almost surely, the following Lindeberg condition holds: $\forall \epsilon > 0$ 
\begin{align*}
\lim_{n\to \infty}\sum_{k = 1}^n \frac{\Exp[(\ln r_{Y_k} - \kappa_{Y_k})^2\Ind[\abs{\ln r_{Y_k} - \kappa_{Y_k}} \geq s_n \epsilon]]}{s_n} = 0,
\end{align*}
which, by the Lindeberg-Feller central limit theorem, implies
\begin{align*}
\sum_{i = 1}^n\frac{\ln r_Y(X_i \mid Y_i) - \kappa^n}{\sqrt{n}} \stackrel{d}{\to} \Normal(0, \Exp_Y[\sigma^2_Y]) \quad \text{as } n \to \infty
\end{align*}
\end{lemma}
\begin{proof}
The proof follows a ``standard'' argument for showing the Lindeberg condition; see \cref{sec:additional_computations} for the details.
\end{proof}
Putting the result of the lemma together with \cref{eq:first_asymp_proof_limit_identity} finishes the proof.
\end{proof}
\section{Proof of the Asymptotic Result Using Large Deviations Techniques}
\label{sec:large_deviation_proof}
Here, we provide a second proof of our asymptotic result in \cref{thm:asymptotic_result_thm_1}, which we restate in an equivalent form below.
\begin{theorem} \label{theorem:MainResultConverse}
    Consider a non-singular joint probability measure $P_{XY}$ such that $\Var\left[ \gam \right] < \infty$. Then,
    \begin{align*}
        \lim_{n \rightarrow \infty} \frac{ R_n - \MI{X}{Y}}{\lb n/n} \ge \frac{1}{2}.
    \end{align*}
\end{theorem}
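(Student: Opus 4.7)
Since this section promises a standalone proof that sidesteps the CSD machinery of Section~IV, the plan is to attack the expected codelength directly via information-spectrum reasoning and the CLT, viewing channel simulation through the lens of entropy coding.

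\textbf{Plan.} First I would establish the entropy lower bound $n R_n \ge \Ent{f_n(X^n, Z) \mid Z} \ge \Ent{Y^n \mid Z}$, which follows because $f_n(X^n, Z)$ is a prefix-free codeword and $Y^n$ is functionally determined by $(f_n(X^n, Z), Z)$. Using independence $Z \ind X^n$ and the vanishing of $\Ent{Y^n \mid X^n, Z}$, this rewrites as $\Ent{Y^n \mid Z} = \MI{X^n}{Y^n} + \MI{X^n}{Z \mid Y^n} = n\cdot \MI{X}{Y} + \MI{X^n}{Z \mid Y^n}$. The task thus reduces to showing $\MI{X^n}{Z \mid Y^n} \ge \tfrac{1}{2}\lb n - o(\lb n)$ for every valid scheme, from which the claim about $R_n$ follows by taking the infimum over schemes.

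\textbf{Large-deviations step.} To lower bound $\MI{X^n}{Z \mid Y^n}$, I would examine the conditional log-likelihood $L_n = \sum_{i=1}^n \ln r_{Y_i}(X_i)$ under $P_{X^n \mid Y^n}$. Non-singularity guarantees that the per-coordinate conditional variance $\sigma^2_Y$ is strictly positive almost surely, so a Lindeberg-Feller CLT (carried out directly, without invoking Section~IV) delivers $(L_n - n\kappa_n)/\sqrt{n} \stackrel{d}{\to} \Normal(0, \Exp_Y[\sigma_Y^2])$ conditional on $Y^n$. This Gaussian spread of width $\Theta(\sqrt{n})$ in the log-likelihood is where the $\tfrac{1}{2}\lb n$ redundancy originates: a local-CLT estimate on $\Prob_{X^n \mid Y^n}[L_n \in [a, a+\Delta]]$ for constant $\Delta$ shows that $P_{X^n \mid Y^n = y^n}$ places mass of order $\Theta(1/\sqrt{n})$ on any thin shell of the log-likelihood. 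Combining this tail estimate with a Kraft-type constraint on the atomic conditional distribution of the codeword given $Y^n$ will force $\MI{X^n}{Z \mid Y^n} \ge \tfrac{1}{2}\lb n + O(1)$, finishing the proof.

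\textbf{Main obstacle.} The crux is converting the Gaussian scale of $L_n$ into the desired mutual-information lower bound without routing through the CSD, since otherwise the argument would collapse into that of Section~IV. The cleanest route is probably a Bahadur-Rao-style estimate on the moment generating function of $L_n$ under $P_{X^n \mid Y^n = y^n}$, giving uniform local-CLT control over a set of typical $y^n$ of probability tending to one. Two related technical hurdles will then need care: handling the continuity of $Y$ (so that $Y^n \mid Z$ becomes purely atomic only after conditioning on $Z$, rather than being discrete a priori), and controlling the Berry-Esseen-style error uniformly in $y^n$. The hypothesis $\Var[\gam] < \infty$ is precisely what licenses the Gaussian approximation at the resolution needed to expose the $\tfrac{1}{2}\lb n$ term.
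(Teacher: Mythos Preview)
Your opening reduction is essentially the paper's: from prefix-freeness and functional dependence you get $nR_n \ge \MIe{X^n}{Y^n\mid Z}$, and via the chain rule with $Z\ind X^n$ this equals $n\MIe{X}{Y}+\MIe{X^n}{Z\mid Y^n}$. The paper writes this same quantity as $-\Exp_{Y^n,Z}[\ln P_{X^n}(A(Y^n,Z))]$, where $A(y^n,z)=\{x^n:g(f(x^n,z),z)=y^n\}$ is the decoding cell, so your target $\MIe{X^n}{Z\mid Y^n}\ge\tfrac12\ln n+O(1)$ is exactly the paper's target $-\Exp[\ln P_{X^n}(A)]\ge n\MIe{X}{Y}+\tfrac12\ln n+O(1)$. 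You also correctly single out a Bahadur--Rao type estimate as the source of the $\tfrac{1}{\sqrt n}$ factor.

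The genuine gap is the step you gloss as ``combining this tail estimate with a Kraft-type constraint.'' Bahadur--Rao and local-CLT statements control probabilities of \emph{tail events} $\{L_n\ge na\}$ or \emph{thin shells} $\{L_n\in[a,a+\Delta]\}$, but the decoding cell $A(y^n,z)$ is an \emph{arbitrary} measurable set determined by the scheme; there is no a priori reason it is shaped like a level set of $L_n$. The only usable constraint on $A$ is that, averaged over $(Y^n,Z)$, the conditional mean $\Exp[L_n\mid X^n\in A]$ equals $n\MIe{X}{Y}$. The paper's key idea---which your proposal does not contain---is an isoperimetric/Neyman--Pearson step: among sets with a given conditional mean log-likelihood, an information ball $B(y^n,z)=\{L_n\ge n\imath_{y^n,z}\}$ maximises $P_{X^n}$-measure. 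Concretely, the paper chooses the radius $\imath_{y^n,z}$ so that $\imath_{y^n}(B)\le\imath_{y^n}(A)$ (their \cref{lemma:Gibbs}, proved via a careful Bahadur--Rao lower bound) and then shows $P_{X^n}(B)\ge P_{X^n}(A)$ by a two-line convex-combination argument. Only after this replacement does the change of measure $P_{X^n}\to P_{X^n\mid Y^n}$ plus the exact-asymptotics bound yield $P_{X^n}(A)\le P_{X^n}(B)\le e^{-n\imath_{y^n,z}}\cdot C/\sqrt{n}$. Your ``local CLT on shells of $P_{X^n\mid Y^n}$'' is measuring the wrong thing (you need $P_{X^n}(A)$, not $P_{X^n\mid Y^n}$ of a shell), and the Kraft inequality has already been spent in the very first line; it does not re-enter.

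A smaller point: the paper's large-deviations route needs an additional exponential-moment hypothesis, $\Exp_{XY}[\exp(\delta\,\gam)]<\infty$ for some $\delta>0$, to make the tilted cumulant generating function finite on a neighbourhood of $\lambda=1$ and to get uniform Berry--Esseen control over typical $y^n$. Your plan invokes only the finite-variance assumption of the theorem statement, which is enough for Lindeberg--Feller but not for the Bahadur--Rao machinery you want to use.
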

\subsection{Cumulants and exponentially tilted measures}
We will use several different cumulant generating functions pertaining to the log-likelihood ratio $\gam$.
\begin{definition}[Cumulant generating function]
    For $\lambda \in \mathbb{R}$ and $y \in \mathcal{Y}$, we define the pointwise cumulant
    \begin{align*}
        \Lambda_{X}(\lambda, y) = \ln \Exp_{X}\Bigg[& \Ind\left[\frac{dP_{X|Y}}{dP_X}(X|y) > 0\right] e^{ \lambda \gam(X|y) }\Bigg].
    \end{align*}
    For $y^n \in \mathcal{Y}^n$, we extend the notation to define $\empLam(\lambda, y^n) = \frac{1}{n}\sum\limits_{i=1}^n \Lambda_X(\lambda, y_i)$.
    The first, second and third derivatives of $\empLam(\cdot, y)$ exist when $\empLam(\cdot,y)$ exists (see 2.2.24 \cite{dembo2009large}) and we denote them by $\empLam'$, $\empLam''$ and $\empLam'''$ respectively. 
\end{definition}
\begin{definition} [Tilted measures and moments]
    For any $\lambda \in \mathbb{R}$ and $y \in \mathcal{Y}$, the $\lambda$-tilted measure w.r.t. $y$, $Q^{\lambda}_{X|y}$ satisfies
    \begin{align}
        \frac{dQ^{\lambda}_{X|y}}{dP_X}(x) =& \Ind\left[\frac{dP_{X|Y}}{dP_X}(x|y) > 0\right] e^{ \lambda\gam(x|y) - \Lambda_X(\lambda,y)}. \label{eq:TiltingMeasure}
    \end{align}
    For a sequence $y^n \in \mathcal{Y}^n$, the measure $Q^{\lambda}_{X^n|y^n}$ is defined as the product of the componentwise tilting measures. We will also define the tilted variance $s_n^2(\lambda, y^n) = \sum\limits_{i=1}^n \Var_{Q^{\lambda}_{X|y_i}}\left[  \gam(X|y_i) \right]$ and the tilted third absolute moment $\muthird(\lambda, y^n) = \sum\limits_{i=1}^n \Exp_{Q^\lambda_{X|y_i}}\left[ \Big| \gam(X|y_i) \Big|^3 \right]$.
\end{definition}
\subsection{Additional Assumptions} \label{section:AdditionalAssume}
This proof requires the additional assumption that there exists $\delta>0$ such that $\Exp_{XY}\left[ \exp\left( \delta \gam \right) \right] < \infty$. \footnote{Note that this implies that $\Exp_Y\empLam(1 + \delta, Y) < \infty$ and also that all the moments of $\left|\gam(X|Y)\right|$ are finite under $P_{XY}$ because $\Exp_{XY}\left[ \exp\left( -\gam \right) \right] < \infty$.}
\subsection{Sketch of the Main Argument}
\paragraph{Initial Rate Bound}
Given a non-singular joint probability measure $P_{XY}$ and a blocklength $n$, let $(f,g,Z)$ be any scheme with rate $R$ (in nats) that simulates $P_{Y|X}^{\times n}$ exactly. Using standard information-theoretic inequalities, we can obtain
\begin{align}
    nR &\ge \Ente{f(X^n,Z)|Z} \label{eq:BeginMain}\\
    &\ge \MIe{X^n}{f(X^n,Z)|Z}\nonumber\\
    &\ge \MIe{X^n}{Y^n|Z}\nonumber\\
    &= \Exp\left[ \ln \frac{dP_{X^n|Y^n,Z}}{dP_{X^n}} \right]. \label{eq:X_mapping_reduction}
\end{align}
Now, consider the set $A(y^n,z) = \{x^n \in \mathcal{X}^n: g(f(x^n,z),z) = y^n\}$. We can assume without loss of generality that $P_{X^n}(A(y^n,z)) > 0$. It is clear that 
\begin{align*}
\frac{dP_{X^n|Y^n,Z}}{dP_{X^n}}(x^n|y^n,z) = \frac{\mathbf{1}\left[x^n \in A(y^n,z)\right]}{P_{X^n}(A(y^n,z))}.
\end{align*}
We can then substitute this in (\ref{eq:X_mapping_reduction}):
\begin{align}
    nR &\ge -\Exp_{Y^n,Z}\left[ \ln P_{X^n}(A(Y^n, Z)) \right]. \label{eq:fixedyu}
\end{align}
\paragraph{Restriction to Typical $(Y^n,Z)$ Realizations}
Our objective is to derive a tight upper bound on $P_{X^n}(A(y^n,z))$ for "typical" realizations of $y^n$ and $z$. We construct a high probability typical set of $(y^n,z)$ pairs --- $\mathcal{U}^{n}_{\text{typical}}$ with \mbox{$1 - P_{Y^n, Z}(\mathcal{U}^{n}_{\text{typical}}) = O\left( \frac{1}{n} \right)$} --- wherein some strong regularity conditions hold. Specifically, consider the \emph{conditional mean log-likelihood ratio}:
\begin{definition}
    For all $y^n \in \mathcal{Y}^n$ and $S \subset \mathcal{X}^n$ s.t. $\frac{dp_{X^n|Y^n}}{dp_{X^n}}(x^n|y^n)>0$ for all $x^n \in S$, we define the conditional mean log-likelihood ratio 
    \begin{align*}
    \imath_{y^n}\left(S\right) = \Exp\left[\frac{1}{n}\sum\limits_{i=1}^n \gam(X_i|y_i) \Big|\, X^n \in S \right].
    \end{align*}
\end{definition}
Then, one of the typicality conditions we impose restricts $\imath_{y^n}\left(A(y^n, z)\right)$ to a small interval around the channel mutual information $\MIe{X}{Y}$. 
\par
We also impose conditions on some higher order moments of $\gam$. For tilting parameters $\lambda$ within a carefully chosen operating interval $[\underline{\lambda}, \overline{\lambda}]$, these conditions allow us to obtain a uniform upper bound on $\muthird(\lambda, y^n)$ as well as uniform lower and upper bounds on $\frac{s_n^2(\lambda, y^n)}{n}$. In other words, we can find positive constants $\underline{m}_2, \overline{m}_2, \overline{m}_3$ such that $\underline{m}_2 \le \frac{s^2_n(\lambda, y^n)}{n} \le \overline{m}_2$ and $\frac{\muthird(\lambda, y^n)}{n} \le \overline{m}_3$ for all $n>0$ and $\lambda \in [\underline{\lambda}, \overline{\lambda}]$. In addition, we also define a constant $\underline{M}$ that serves to derive a uniform lower bound on a certain large-deviations probability (see Definition~\ref{def:YucelLB}, and Lemma 1~\cite{altuug2020exact})
\par
The details of the construction are described in Appendix~\ref{section:Typicality}. 
\paragraph{Reduction to Ball-Like Sets}
Next, our goal is to obtain a tight upper bound for $P_{X^n}(A(y^n,z))$. We proceed by constructing balls that are close in measure to $A(y^n,z)$ which can then be precisely characterized using refined large-deviation bounds.
\par
We shall replace $A(y^n,z)$ with the ball
\begin{align*}
B(y^n,z) = \left\{x^n: \frac{1}{n}\sum\limits_{i=1}^n\gam(x_i|y_i) \ge \imath_{y^n,z}\right\},
\end{align*}
where we have defined the radius
\begin{equation} 
\imath_{y^n,z} = \imath_{y^n}(A(y^n,z)) - \frac{1}{\underline{M} \cdot \underline{\lambda}^2 \cdot \sqrt{\underline{m}_2} \cdot n}. \label{eq:mathidef}
\end{equation}

In words, $B(y^n,z)$ is an information-ball centered at $y^n$ such that the average
of the log-likelihood ratio is close to, but smaller than, that of the original set $A(y^n,z)$. More precisely, we have the following relation.

\begin{lemma}\label{lemma:Gibbs}
For sufficiently large $n$ and typical $(y^n,z)$ we have
\begin{align*}
\imath_{y^n}(B(y^n,z)) \le \imath_{y^n}(A(y^n,z)).
\end{align*}
\end{lemma}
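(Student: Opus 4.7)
The plan is to reduce the inequality to a bound on the conditional mean excess of the empirical log-likelihood over its threshold, and then evaluate that excess via exponential tilting and a Berry-Esseen refinement of the central limit theorem. Writing $f(x^n) = \frac{1}{n}\sum_{i=1}^n \gam(x_i | y_i)$ and $c = \imath_{y^n,z}$, the set $B(y^n,z)$ is by construction the superlevel set $\{x^n : f(x^n) \ge c\}$, so an elementary manipulation gives
\begin{align*}
\imath_{y^n}\bigl(B(y^n,z)\bigr) - c \;=\; \frac{\Exp_{P_{X^n}}\bigl[(f - c)\Ind[f \ge c]\bigr]}{P_{X^n}(f \ge c)}.
\end{align*}
Combined with $\imath_{y^n}(A(y^n,z)) - c = \frac{1}{\underline{M}\,\underline{\lambda}^2\,\sqrt{\underline{m}_2}\,n}$ from \eqref{eq:mathidef}, the claim reduces to showing this ratio is at most $\frac{1}{\underline{M}\,\underline{\lambda}^2\,\sqrt{\underline{m}_2}\,n}$.

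Next I would change measure to the Esscher tilt $Q^{\lambda_c}_{X^n|y^n}$ from \eqref{eq:TiltingMeasure}, where $\lambda_c$ is the unique tilting parameter satisfying $\empLam'(\lambda_c, y^n) = c$. Under typicality $\lambda_c$ sits in the operating interval $[\underline{\lambda}, \overline{\lambda}]$, so the uniform moment bounds $\underline{m}_2 \le s_n^2(\lambda_c, y^n)/n \le \overline{m}_2$ and $\muthird(\lambda_c, y^n)/n \le \overline{m}_3$ apply. The prefactor $e^{\empLam(\lambda_c, y^n) - n\lambda_c c}$ is common to numerator and denominator and cancels, reducing the ratio to
\begin{align*}
\frac{1}{\sqrt{n}} \cdot \frac{\Exp_{Q^{\lambda_c}}\bigl[g\, \Ind[g \ge 0]\, e^{-\sqrt{n}\lambda_c g}\bigr]}{\Exp_{Q^{\lambda_c}}\bigl[\Ind[g \ge 0]\, e^{-\sqrt{n}\lambda_c g}\bigr]},
\end{align*}
where $g = \sqrt{n}(f - c)$ has, under $Q^{\lambda_c}$, mean zero and variance $s_n^2(\lambda_c,y^n)/n$. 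A Berry-Esseen comparison with the corresponding $\Normal(0, s_n^2/n)$ law followed by a direct Mills-ratio computation on the Gaussian side yields a leading term of order $\frac{1}{n\, \lambda_c\, \sqrt{s_n^2/n}}$; invoking $\lambda_c \ge \underline{\lambda}$ and $s_n^2/n \ge \underline{m}_2$ then closes the inequality for large enough $n$.

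The main obstacle is making the Berry-Esseen step uniform across typical $(y^n,z)$ while the tilt $e^{-\sqrt{n}\lambda_c g}$ is exponentially severe in $\sqrt{n}$: the approximation error has to be weighed in relative terms against an exponentially small normalizer, and this is precisely where the explicit large-deviations lower bound captured by $\underline{M}$ (from Lemma~1 of \cite{altuug2020exact}) enters to keep the denominator $P_{X^n}(f \ge c)$ above a quantitatively effective threshold. The particular shape of the shift $\frac{1}{\underline{M}\,\underline{\lambda}^2\,\sqrt{\underline{m}_2}\,n}$ in \eqref{eq:mathidef} is calibrated exactly so that these three quantitative bounds --- the variance lower bound $\underline{m}_2$, the tilt lower bound $\underline{\lambda}$, and the large-deviation floor $\underline{M}$ --- combine to dominate the Mills-ratio constant together with the Berry-Esseen remainder.
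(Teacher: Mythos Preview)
Your proposal is correct and matches the paper's approach: both pass to the Esscher tilt $Q^{\lambda_c}_{X^n|y^n}$ with $\empLam'(\lambda_c,y^n)=\imath_{y^n,z}$, write the excess as the ratio $\frac{s_n}{n}\cdot\frac{\Exp_{Q}[W_n e^{-\psi_n W_n}\Ind(W_n\ge 0)]}{\Exp_{Q}[e^{-\psi_n W_n}\Ind(W_n\ge 0)]}$, upper-bound the numerator via integration by parts and Berry--Esseen, lower-bound the denominator by $\underline{M}/\sqrt{n}$ via Lemma~1 of \cite{altuug2020exact}, and then combine $\underline{m}_2,\underline{\lambda},\underline{M}$ to recover the calibrated shift in \eqref{eq:mathidef}. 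One small slip in your sketch: the resulting bound carries $\lambda_c^{-2}$ rather than $\lambda_c^{-1}$ (the numerator contributes $1/\psi_n^2=1/(\lambda_c s_n)^2$), which is why the shift involves $\underline{\lambda}^2$.
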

\begin{proof}
    Please see Appendix~\ref{section:GibbsProof}.
\end{proof}

From this fact it follows that we can replace $A(y^n,z)$ with 
$B(y^n,z)$, as shown in the next lemma.

\begin{lemma}
For any typical $(y^n,z)$ we have
\begin{align*}
P_{X^n}(B(y^n,z)) \ge P_{X^n}(A(y^n,z)).
\end{align*}
\end{lemma}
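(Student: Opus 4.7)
The plan is to establish the inequality via a short algebraic argument combining the threshold structure of $B(y^n, z)$ with Lemma~\ref{lemma:Gibbs}. The key observation I would exploit is that the conditional mean log-likelihood admits the representation
\begin{align*}
\imath_{y^n}(S) = \frac{1}{P_{X^n}(S)} \int_S \frac{1}{n}\sum_{i=1}^n \gam(x_i \mid y_i) \, dP_{X^n}(x^n),
\end{align*}
which is consistent with the fact that the regular conditional distribution $P_{X^n \mid Y^n, Z}(\,\cdot\, \mid y^n, z)$ coincides with $P_{X^n}$ restricted to $A(y^n, z)$ --- the identity implicitly used in the passage from (\ref{eq:BeginMain}) to (\ref{eq:fixedyu}).

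Writing $\tau = \imath_{y^n, z}$ for brevity, I would proceed in two short steps. First, I would decompose $\int_A \frac{1}{n}\sum_i \gam \,dP_{X^n} - \int_B \frac{1}{n}\sum_i \gam \,dP_{X^n}$ over the symmetric difference $(A\setminus B) \cup (B\setminus A)$. Since $\frac{1}{n}\sum_i \gam(x_i \mid y_i) < \tau$ on $A \setminus B$ and $\geq \tau$ on $B \setminus A$, upper-bounding each integrand and rearranging via the representation above yields
\begin{align*}
P_{X^n}(A)\,\bigl[\imath_{y^n}(A) - \tau\bigr] \leq P_{X^n}(B)\,\bigl[\imath_{y^n}(B) - \tau\bigr].
\end{align*}
Second, I would eliminate the bracketed quantities using three facts: (i) equation~(\ref{eq:mathidef}) gives $\imath_{y^n}(A) - \tau = 1/(\underline{M}\,\underline{\lambda}^2\,\sqrt{\underline{m}_2}\,n) > 0$; (ii) since $\frac{1}{n}\sum_i \gam \geq \tau$ throughout $B$, we have $\imath_{y^n}(B) \geq \tau$; and (iii) Lemma~\ref{lemma:Gibbs} gives $\imath_{y^n}(B) - \tau \leq \imath_{y^n}(A) - \tau$. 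Dividing the displayed inequality by the strictly positive slack $\imath_{y^n}(A) - \tau$ then gives $P_{X^n}(A) \leq P_{X^n}(B)$.

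The delicate point will be the sign-tracking in the first step: the ordering $\imath_{y^n}(B) \leq \imath_{y^n}(A)$ established by the Gibbs Lemma is in the ``opposite'' direction of the $P_{X^n}$-comparison that we want, and it is only after the algebraic rearrangement that this converts to the desired inequality. The strict positivity of the slack $\imath_{y^n}(A) - \tau$, guaranteed by the precise choice of radius in (\ref{eq:mathidef}), is what makes the final division step legitimate and the entire argument go through uniformly in $n$ and in typical $(y^n, z)$.
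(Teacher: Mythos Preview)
Your argument is correct. Both your proof and the paper's rely on the same two ingredients---the threshold structure of $B(y^n,z)$ and Lemma~\ref{lemma:Gibbs}---but package the algebra differently. The paper writes $\imath_{y^n}(A)$ and $\imath_{y^n}(B)$ as convex combinations over the pieces $B\cap A$, $B\cap A^c$, $B^c\cap A$ and compares the weights $P_{X^n}(B\cap A)/P_{X^n}(A)$ versus $P_{X^n}(B\cap A)/P_{X^n}(B)$ directly, using only that $\imath_{y^n}(B^c\cap A)\le\imath_{y^n}(B\cap A^c)$ and $\imath_{y^n}(B^c\cap A)\le\imath_{y^n}(B\cap A)$. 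Your route via the symmetric-difference integral, leading to $P_{X^n}(A)\bigl[\imath_{y^n}(A)-\tau\bigr]\le P_{X^n}(B)\bigl[\imath_{y^n}(B)-\tau\bigr]$, is a bit more direct and has the virtue of making the role of the strictly positive slack $\imath_{y^n}(A)-\tau=1/(\underline{M}\,\underline{\lambda}^2\sqrt{\underline{m}_2}\,n)$ completely explicit; in the paper's version that positivity (or the equivalent strict inequality $\imath_{y^n}(B^c\cap A)<\imath_{y^n}(B\cap A)$) is used only implicitly to rule out a degenerate case.
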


\begin{proof}
Writing $A$ for $A(y^n,z)$, and similarly for $B$, by the previous lemma we have
$\imath_{y^n}(B) \le \imath_{y^n}(A)$, which implies
\begin{align}
    &\imath_{y^n}(B \cap A)\frac{P_{X^n}(B \cap A)}{P_{X^n}(B)} + \nonumber\\
    &\imath_{y^n}(B \cap A^c)\left(1 - \frac{P_{X^n}(B \cap A)}{P_{X^n}(B)}\right)  \label{eq:ball-construction} \\
    &\le \imath_{y^n}(B \cap A)\frac{P_{X^n}(B \cap A)}{P_{X^n}(A)} + \nonumber\\
    &\phantom{11,}\imath_{y^n}(B^c \cap A)\left(1 - \frac{P_{X^n}(B \cap A)}{P_{X^n}(A)}\right). \label{eq:convex_comb_ineq}
\end{align}
From the definition of $B$, we know that $\imath_{y^n}(B^c \cap A) \le \imath_{y^n}(B \cap A^c)$ and that $\imath_{y^n}(B^c \cap A) \le \imath_{y^n}(B \cap A)$. Therefore, (\ref{eq:convex_comb_ineq}) implies that 
\begin{align*}
1 - \frac{P_{X^n}(B \cap A)}{P_{X^n}(A)} \le 1 - \frac{P_{X^n}(B \cap A)}{P_{X^n}(B)},
\end{align*}
from which  the conclusion follows.
\end{proof}
\paragraph{Change of Measure and Conclusion}
We have, for sufficiently large $n$ and typical $(y^n,z)$ pairs,
\begin{align}
   & P_{X^n}(A(y^n, z)) \nonumber\\
     & \le  P_{X^n}(B(y^n, z)) \nonumber\\
     & =  \int \Ind(x^n \in B(y^n,z)) dP_{X^n}(x^n) \nonumber\\
     & =  \int e^{\ln \frac{dP_{X^n}}{dP_{X^n|Y^n}}(x^n,y^n)} \Ind(x^n \in B(y^n,z)) dP_{X^n|Y^n}(x^n|y^n) \nonumber\\
     & =  e^{-n \cdot \imath_{y^n,z}} \int \Big(e^{\ln \frac{dP_{X^n}}{dP_{X^n|Y^n}}(x^n,y^n) + n \cdot \imath_{y^n,z}} \nonumber\\
     &\quad\quad\quad\quad\quad\quad\,\,\,\,
     \cdot\mathbf{1}(x^n \in B(y^n,z)) dP_{X^n|Y^n}(x^n|y^n)\Big). \label{eq:PauseMain}
\end{align}
By the definition of $B(y^n,z)$ and the fact that $(y^n,z)$ is typical, the integral is upper bounded by (Lemma 2, ~\cite{altuug2020exact})
\begin{align*}
\frac{1}{\sqrt{n}} \left( \frac{1}{\sqrt{2 \pi \underline{m}_2}} + \frac{\overline{m}_3}{\underline{m}_2} \right).
\end{align*}
(see also Lemma 47~\cite{polyanskiy2010channel}). Thus continuing from (\ref{eq:BeginMain})-(\ref{eq:PauseMain}),
\begin{align}
nR  
&\ge -\Exp_{Y^n,Z}\left[ \ln P_{X^n}\left( A(Y^n,Z) \right)|\mathcal{U}^{n}_{\text{typical}} \right] \cdot P_{Y^n,Z}(\mathcal{U}^{n}_{\text{typical}}) \nonumber\\
& \ge \left[ \Exp_{Y^n,Z}\left[  n \cdot \imath_{Y^n,Z} \right] + \frac{1}{2} \ln n + \ln\left(\frac{1}{\sqrt{2 \pi \underline{m}_2}} + \frac{\overline{m}_3}{\underline{m}_2}
\right) \right] \nonumber\\
&\phantom{111} \cdot\left(1- \frac{C}{n}\right) 
\label{eq:ratemoduloradius}
\end{align}
where the characterization of the constant $C$ can be found in the proof of Lemma \ref{lemma:TypicalityProbBound} in Appendix \ref{section:Typicality}. From the definition of $\imath_{y^n,z}$ in (\ref{eq:mathidef}), we have
\begin{align}
&\Exp_{Y^n,Z}\left[  n \cdot \imath_{Y^n,Z}  \right] \nonumber\\
&= \Exp_{Y^n,Z}\left[  n \cdot \imath_{y^n}(A(Y^n,Z))\right] - \frac{1}{\underline{M} \cdot \underline{\lambda}^2 \cdot \sqrt{\underline{m}_2} \cdot n}. \label{eq:radiusexpansion}
\end{align}
Finally,
\begin{align*}
&\Exp_{Y^n,Z}\left[ n \cdot \imath(A(Y^n,Z))\right]\\
&= \Exp_{Y^n,Z} \left[  \Exp_{X^n}\left[\sum\limits_{i=1}^n \gam(X_i|Y_i) \Big|\, X^n \in A(Y^n,Z) \right]  \right] \\
& = \Exp_{X^n,Y^n} \left[  \sum\limits_{i=1}^n \gam(X_i|Y_i)\right]   \\
& = n\MIe{X}{Y}.
\end{align*}
The conclusion follows by substituting this into (\ref{eq:radiusexpansion}) and (\ref{eq:ratemoduloradius}).
\section{Discussion}
In this paper, we first proved \cref{thm:one_shot_bound}, a general, tight (by Theorem V.2 of \cite{goc2024channel}), one-shot lower bound on the rate of channel simulation algorithms in terms of the channel simulation divergence, thus establishing the importance of this information measure.
We conjecture that our bound is also achievable, but we leave proving this for future work.
\par
Second, we gave two proofs for \cref{theorem:MainResultConverse}, which, combined with Theorem 1 of \cite{sriramu2024optimal}, shows that the optimal second-order redundancy of non-singular channel simulation in the asymptotic setting is $\frac{\lb n}{2n}$.
\section{Author Contributions}
GF conceived the proofs of the one-shot result in \cref{thm:one_shot_bound} and the asymptotic result in \cref{sec:csd_asympt_result}, while SMS derived the proof of the asymptotic result using large deviations theory in \cref{sec:large_deviation_proof}.
They contributed equally to writing the paper.
ABW provided technical guidance for SMS and assisted with preparing the manuscript.



\bibliographystyle{IEEEtran}
\bibliography{references}
\newpage
\onecolumn
\appendices

\crefalias{section}{appendix}

\section{Additional Computations}
\label{sec:additional_computations}
\par
\textbf{Showing \cref{eq:one_shot_bound_second_order_dominance}:}
As we noted in the main text, the proof of the inequality follows the second-order dominance technique of Li and El Gamal \cite{li2018strong}.
First, introduce for convenience the notation $(x)_+ = \max\{0, x\}$.
Then, we have
\begin{align}
\int_0^u \Prob_{\pi \mid y}[\pi(y) \leq v] \, dv
&= \Exp_{\pi \mid y}[(u - \pi(y))_+]
\label{eq:appendix_one_shot_bound_CDF_to_exp} \\
&=\Exp_{\pi \mid y}[(w_y(w_y^{-1}(u)) - \Exp_X[\pi(y \mid X)])_+] \nonumber\\
&= \Exp_{\pi \mid y}\left[\left(\Exp_{\rho_y}\left[\Ind[\rho_y \geq w_y^{-1}(u)\right] - \Exp_X[\pi(y \mid X) \mid \rho_y] \right)_+\right] \nonumber \\
&\leq \Exp_{\pi \mid y}\left[\Exp_{\rho_y}\left[\left(\Ind[\rho_y \geq w_y^{-1}(u)] - \Exp_X[\pi(y \mid X) \mid \rho_y] \right)_+\right]\right] \tag{Jensen} \\
&= \Exp_{\pi \mid y}\left[\Exp_{\rho_y}\left[\Ind[\rho_y \geq w_y^{-1}(u)] \cdot\left(1 -  \Exp_X[\pi(y \mid X) \mid \rho_y]\right) \right]\right] \label{eq:appendix_one_shot_bound_relu_identity}\\
&=\Exp_{\rho_y}\left[\Ind[\rho_y \geq w_y^{-1}(u)] \cdot\left(1 -  \Exp_X\left[\Exp_{\pi \mid y}[\pi(y \mid X) \mid \rho_y]\right]\right) \right] \tag{Fubini} \\
&= \Exp_{\rho_y}\left[\Ind[\rho_y \geq w_y^{-1}(u)] \cdot\left(1 -  \Exp_X\left[\frac{dP_{X \mid Y}}{dP_X}(X \mid y) \cdot \Exp_{\pi \mid y}[\pi(y \mid X)] \,\middle\vert\, \rho_y\right]\right) \right] \tag{Bayes' rule} \\
&= \Exp_{\rho_y}\left[\Ind[\rho_y \geq w_y^{-1}(u)] \cdot\left(1 -  \mu_y \cdot \Exp_X\left[\frac{dP_{X \mid Y}}{dP_X}(X \mid y) \,\middle\vert\, \rho_y\right]\right) \right] \nonumber \\
&= \Exp_{\rho_y}\left[\Ind[\rho_y \geq w_y^{-1}(u)] \cdot\left(1 -  \mu_y \cdot \rho_y \right) \right] \nonumber \\
&= u - \mu_y \int_0^u w_y^{-1}(v) \, dv \label{eq:appendix_one_shot_bound_last_line}
\end{align}
In the above, \cref{eq:appendix_one_shot_bound_CDF_to_exp} follows since for any random variable $V \sim P_V$ supported on $[0, 1]$ we have
\begin{align*}
\int_0^u \Prob[V \leq t] \, dt 
&= \int_0^1 \int_0^1 \Ind[v \leq t] \Ind[t \geq u] \, dt \, dP_V(v) \\
&= \int_0^1 (u - v)_+ \, dP_V(v) \\
&= \Exp_V[(u - V)_+]
\end{align*}
Next, to see that \cref{eq:appendix_one_shot_bound_relu_identity} holds, set $\pi$ and note that
\begin{align*}
\Exp_{\rho_y}[(\Ind[\rho_y \geq w_y^{-1}(u)] - \pi)_+]
&= \Exp_{\rho_y}[(\Ind[\rho_y \geq w_y^{-1}(u)] - \pi) \cdot \Ind[\Ind[\rho_y \geq w_y^{-1}(u)] \geq \pi]] \\
&= \Exp_{\rho_y}[(\Ind[\rho_y \geq w_y^{-1}(u)] - \pi) \cdot \Ind[\rho_y \geq w_y^{-1}(u)]] \tag{since $\pi \in (0, 1]$} \\
&= \Exp_{\rho_y}[(1 - \pi) \cdot \Ind[\rho_y \geq w_y^{-1}(u)]].
\end{align*}
Finally, \cref{eq:appendix_one_shot_bound_last_line} holds, because
\begin{align*}
\Exp_{\rho_y}\left[\Ind[\rho_y \geq w_y^{-1}(u)] \cdot\left(1 -  \mu_y \cdot \rho_y \right) \right] 
&= -\int_0^\infty \Ind[\rho_y \geq w_y^{-1}(u)] \cdot (1 - \mu_y \cdot \rho_y) \, dw_y(\rho_y) \\
&=\int_0^1 \Ind[w_y^{-1}(v) \geq w_y^{-1}(u)] \cdot \left(1 - \mu_y \cdot w_y^{-1}(v)\right) \, dv \\
&= \int_0^u \left(1 - \mu_y \cdot w_y^{-1}(v)\right) \, dv.
\end{align*}
\par
\textbf{Showing \cref{eq:one_shot_bound_ibp_twice_1}:} we perform a change of variables and integrate by parts twice to obtain the desired result:
\begin{align*}
\Exp_{\pi \mid y}[\varphi(\pi(y))] &= \int_{\PMSpace_\YSpace} \varphi(\pi(y)) \, d\Palm_y(\pi) \\
&= \int_0^1 \varphi(u) \, d\Prob_{\pi \mid y}[\pi(y) \leq u] \\
&= \varphi(1) - \int_0^1 \varphi'(u) \Prob_{\pi \mid y}[\pi(y) \leq u] \, du \\
&= -\varphi'(1)(1 - \mu_y) + \int_0^1 \varphi''(u) \cdot \int_0^u \Prob_{\pi \mid y}[\pi(y) \leq v] \, dv \,du \\
&= \lb(e)\left(1-\mu_y - \int_0^1 \int_0^u \frac{\Prob_{\pi \mid y}[\pi(y) \leq v]}{u} \, dv \, du\right),
\end{align*}
where we used the Darth Vader rule to obtain $\int_0^1 \Prob_{\pi \mid y}[\pi(y) \leq v] \, dv = 1 - \mu_y$ and used the facts that $\varphi(1) = 0$ and $\varphi'(1) = -\lb(e)$.

\par
\textbf{Showing \cref{eq:one_shot_bound_ibp_twice_2}:}
This result is a variation on the integral representation of the channel simulation divergence derived in Equation 2 in \cite{goc2024channel}.
We obtain the result by integrating by parts twice:
\begin{align*}
\CSD{Q}{P} 
&= \int_0^\infty \varphi(w_P(h)) \, dh \\
&= \int_{w_P(0)}^{w_P(\infty)} \varphi(p) \, dw^{-1}_P(p) \\
&= \int_0^1 \varphi'(p) w^{-1}_P(p) \, dp \\
&= -\lb(e) - \int_0^1 \varphi''(p) \int_0^p w^{-1}_P(v) \, dv \,dp \\
&= \lb(e)\left(-1 + \int_0^1 \frac{1}{p} \int_0^p w^{-1}_P(v) \, dv \, dp\right).
\end{align*}%
\par
\textbf{Showing the second part of \cref{lemma:divergence_difference}:}
By Equation 5 of \cite{goc2024channel}, we have the following integral representation of the KL divergence:
\begin{align*}
\KLD{Q}{P} = \lb(e) + \int_0^\infty w_P(h) \lb(h) \, dh.
\end{align*}
Now, using this in conjunction with the definition of the channel simulation divergence, we get
\begin{align*}
(\CSD{Q}{P} - \KLD{Q}{P}) \cdot \ln(2) 
&= -1 - \int_0^\infty w_P(h)\ln(h \cdot w_P(h)) \, dh \\
&= -1 - \int_{-\infty}^\infty e^t \cdot w_P(e^t) \cdot (e^t \cdot w_P(e^t)) \, dt \\
&= \DiffEnt{\ln H} \cdot \ln(2) - 1,
\end{align*}
where the second equality follows via the substitution $t \gets \ln h$ and the last equality follows since for the density $\ln H$ we have
\begin{align*}
\frac{d}{dt}\Prob[\ln H \leq t] = \frac{d}{dt}\Prob[H \leq e^t] = w_P(e^t) \cdot e^t
\end{align*}
\par
\textbf{Showing \cref{eq:asymptotic_result_proof_sequence_CDF_identity}:} First, by Equation 9 of \cite{flamich2024some}, we have
\begin{align*}
w_{P_{X^n}}(h) = -\int_h^\infty \frac{1}{h} \, d w_{P_{X^n \mid Y^n}}(h)
\end{align*}
Hence, we have
\begin{align}
\Prob[s(\ln H_n - b) \leq t \mid Y^n] 
&= \int_{-\infty}^t s \cdot e^{s \cdot u + b} w_{P_{X^n}}(e^{s \cdot u + b}) \, du \nonumber\\
&= -\int_{-\infty}^t s \cdot e^{s \cdot u + b} \cdot \int_{e^{s \cdot u + b}}^{\infty} \frac{1}{v} \, d w_{P_{X^n \mid Y^n}}(v) \nonumber \\
&=-\int_{-\infty}^t s \cdot e^{s \cdot u + b} \cdot \int_{u}^{\infty} e^{-(s \cdot \eta + b)} \, d \Lambda_n(\eta) \tag{substitute $v \gets \exp(s \cdot \eta + b)$}\\
&=-\int_{-\infty}^t s \cdot \int_{u}^{\infty} e^{-s \cdot (\eta - u)} \, d \Lambda_n(\eta) \nonumber\\
&=-\int_{-\infty}^t \int_{u}^{\infty} \frac{\partial}{\partial u} e^{-s \cdot (\eta - u)} \, d \Lambda_n(\eta) \nonumber\\
&=-\int_{-\infty}^t \frac{d}{du}\int_{u}^{\infty} e^{-s \cdot (\eta - u)} \, d \Lambda_n(\eta) - \int_{-\infty}^t \, d\Lambda_n(\eta) \tag{Leibniz integral rule}\\
&=\int_{-\infty}^t e^{-s(\eta - t)} \, d\Lambda_n(\eta) + 1 - \Lambda_n(t), \tag{fundamental theorem of calculus}
\end{align}
as required.
\par
\textbf{Showing \cref{lemma:lindeberg_cond}:}
The proof follows from a ``standard'' argument:
\begin{enumerate}
\item For each $n \geq 1$, we replace the infinite sum in the Lindeberg condition with a single quantity proportional to the expectation $\Exp[T_n]$ of an appropriate random variable $T_n$.
\item We show that $T_n$ converges to $0$ in probability.
\item Then, by the dominated convergence theorem, $\Exp[T_n] \to 0$ as $n \to \infty$.
\end{enumerate}
\begin{proof}
Fix $\epsilon > 0$.
Then, note that
\begin{align*}
\lim_{n \to \infty}\Exp_{Y^n}&\left[\frac{1}{s_n^2}\sum_{k = 1}^n \Exp_{X_k \mid Y_k}\left[(\ln r_{Y_k}(X_k) - \kappa_{Y_k})^2 \Ind[\abs{\ln r_{Y_k}(X_k) - \kappa_{Y_k}} \geq s_n \epsilon]\right]\right] \\
&= \lim_{n \to \infty}\sum_{k = 1}^n \Exp_{X_k, Y^n}\left[\frac{1}{s_n^2}(\ln r_{Y_k}(X_k) - \kappa_{Y_k})^2 \Ind[\abs{\ln r_{Y_k}(X_k) - \kappa_{Y_k}} \geq s_n \epsilon]\right] \\
&=\lim_{n \to \infty}n \cdot \Exp_{X_1, Y^n}\left[\frac{1}{s_n^2}(\ln r_{Y_k}(X_k) - \kappa_{Y_k})^2 \Ind[\abs{\ln r_{Y_k}(X_k) - \kappa_{Y_k}} \geq s_n \epsilon]\right],
\end{align*}
where the last line follows since the expectations over $(X_k, Y^n)$ and $(X_1, Y^n)$ are exchangeable.
Now, let
\begin{align*}
T_n = \frac{n}{s_n^2}(\ln r_{Y_k}(X_k) - \kappa_{Y_k})^2 \Ind[\abs{\ln r_{Y_k}(X_k) - \kappa_{Y_k}} \geq s_n \epsilon].
\end{align*}
By the weak law of large numbers, $s_n^2 / n \to \Exp_Y[\sigma^2_Y]$ in probability.
Hence, we also have $T_n \to 0$ in probability.
Finally, by the assumption of \cref{thm:asymptotic_result_thm_1}, $\Var_{X, Y}[\ln r_Y(X)] < \infty$, hence applying the dominated convergence theorem to $T_n$ with $(\ln r_Y(X))^2$ as the dominating variable establishes the Lindeberg condition. 
\end{proof}
\section{Details concerning the large deviations proof}
\subsection{Construction of the typical set $\mathcal{U}^{n}_{\text{typical}}$} \label{section:Typicality}
First, we consider the interval that satisfies the conditions of Lemma \ref{lemma:MomentsExistEnsemble}. Let us denote this by $[\underline{\lambda}, \overline{\lambda}]$. Then, for any $\epsilon$ that lies in this interval, let  
\begin{equation}
    K_\epsilon = \{ \underline{{\lambda}}, \underline{{\lambda}} + \epsilon, \underline{{\lambda}} + 2\epsilon, \ldots, \underline{{\lambda}} + k\epsilon, \overline{\lambda}\},
\end{equation}
where $k$ is the largest integer s.t. $\underline{{\lambda}} + k\epsilon < \overline{\lambda}$.\\

Then, for $j \in \{1,2, 3\}$ and $\lambda \in K_\epsilon$, consider the events
\begin{align}
    \Delta^{(n)}_{\epsilon} &= \left\{Y^n, Z: \left|\Exp_{X^n}\left[ \frac{1}{n}\sum\limits_{i=1}^n\gam(X_i|Y_i)|Y^n, Z \right] - \MIe{X}{Y}\right| > \epsilon \right\}, \label{eq:CodeClose}\\
    \Delta^{(n)}_{j, \lambda, \epsilon} &= \left\{ Y^n,Z: \left| \frac{1}{n}\sum\limits_{i=1}^n\Exp_{Q^\lambda_{X|Y_i}}\left[ \left| \gam(X|Y_i) \right|^j \right] - \Exp_Y\Exp_{Q^\lambda_{X|Y}}\left[ \left| \gam(X|Y) \right|^j | Y \right] \right| > \epsilon \right\} \label{eq:FiniteNet}\\
    \underline{\Delta}^{(n)}_{j, \lambda, \epsilon} &= \left\{ Y^n,Z: \left| \frac{1}{n}\sum\limits_{i=1}^n\Exp_{Q^\lambda_{X|Y_i}}\left[ \left( \gam(X|Y_i) \right)^j \right] - \Exp_Y\Exp_{Q^\lambda_{X|Y}}\left[ \left( \gam(X|Y) \right)^j | Y \right] \right| > \epsilon \right\} \label{eq:FiniteNetRaw}
\end{align}
Let $\mathcal{U}^{n, \epsilon}_{\text{atypical}}$ denote the union of the above events for all values of $j$ and $\lambda$. Then, we can show that the probability of $\mathcal{U}^{n, \epsilon}_{\text{atypical}}$ decays as $O\left(\frac{1}{n}\right)$:
\begin{lemma} \label{lemma:TypicalityProbBound}
    For all $n>0$, we have
    \begin{equation}
        P_{Y^n,Z}(\mathcal{U}^{n, \epsilon}_{\text{atypical}}) \le \frac{C}{n},
    \end{equation}
    for a constant $C$ that is a function of $P_{XY}$ and $\epsilon$.
\end{lemma}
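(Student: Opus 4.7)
The plan is to decompose the union $\mathcal{U}^{n,\epsilon}_{\text{atypical}}$ into its finitely many constituent atypicality events, apply a union bound, and control each event by Chebyshev's inequality, with the requisite finite second moments furnished by the integrability hypothesis of \cref{section:AdditionalAssume}.

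First I would observe that $\mathcal{U}^{n,\epsilon}_{\text{atypical}}$ is the union of $\Delta^{(n)}_\epsilon$ together with $\Delta^{(n)}_{j,\lambda,\epsilon}$ and $\underline{\Delta}^{(n)}_{j,\lambda,\epsilon}$ for $j\in\{1,2,3\}$ and $\lambda\in K_\epsilon$. Since $K_\epsilon\subset[\underline{\lambda},\overline{\lambda}]$ is a finite grid whose cardinality depends only on $P_{XY}$ and $\epsilon$, the union contains a fixed number of terms and it is enough to bound each event's probability by $c/n$ with $c$ uniform over $(j,\lambda)\in\{1,2,3\}\times K_\epsilon$.

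For $\Delta^{(n)}_{j,\lambda,\epsilon}$ (and analogously $\underline{\Delta}^{(n)}_{j,\lambda,\epsilon}$), define $W(\lambda,y)=\Exp_{Q^\lambda_{X|y}}[|\gam(X|y)|^j]$. Under $P_{Y^n}=P_Y^{\times n}$ the summands $W(\lambda,Y_i)$ are i.i.d., so the defining inequality $|n^{-1}\sum_{i=1}^n W(\lambda,Y_i)-\Exp_Y[W(\lambda,Y)]|>\epsilon$ is controlled by Chebyshev, giving probability at most $\Var_Y[W(\lambda,Y)]/(n\epsilon^2)$. The task then reduces to verifying $\Var_Y[W(\lambda,Y)]<\infty$ uniformly on $K_\epsilon$. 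Jensen's inequality gives $W(\lambda,y)^2\le \Exp_{Q^\lambda_{X|y}}[|\gam(X|y)|^{2j}]$; applying Cauchy--Schwarz to the tilting weight $e^{\lambda\gam-\Lambda_X(\lambda,Y)}$ reduces the $Y$-integral to a product of $\Exp_Y[e^{\Lambda_X(2\lambda,Y)-2\Lambda_X(\lambda,Y)}]$ and a polynomial moment of $\gam$ under $P_X\times P_Y$. Both factors are finite for $\lambda\in[\underline{\lambda},\overline{\lambda}]$ by the additional hypothesis $\Exp_{XY}[e^{\delta\gam}]<\infty$ together with \cref{lemma:MomentsExistEnsemble}, and continuity in $\lambda$ on the compact interval $[\underline{\lambda},\overline{\lambda}]$ delivers uniformity over the grid $K_\epsilon$.

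For the remaining event $\Delta^{(n)}_\epsilon$, I would introduce $F(Y^n,Z)=\Exp[n^{-1}\sum_i\gam(X_i|Y_i)\mid Y^n,Z]$. The tower property gives $\Exp[F]=\MIe{X}{Y}$, and the contraction-of-variance identity $\Var[\Exp[W\mid\mathcal{G}]]\le\Var[W]$, applied to $W=n^{-1}\sum_i\gam(X_i|Y_i)$, yields $\Var[F]\le\Var_{XY}[\gam(X|Y)]/n$, which is finite by the standing hypothesis of \cref{theorem:MainResultConverse}. Chebyshev then yields $\Prob(\Delta^{(n)}_\epsilon)\le\Var_{XY}[\gam(X|Y)]/(n\epsilon^2)$, of the required order. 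The main obstacle will be the uniform moment control for $\Var_Y[W(\lambda,Y)]$: coupling the exponential tilt with polynomial moments of $\gam$ requires careful H\"older-style bookkeeping, and \cref{lemma:MomentsExistEnsemble}---the very reason the interval $[\underline{\lambda},\overline{\lambda}]$ is selected the way it is---is designed precisely to supply the needed integrability uniformly on this interval, yielding the constant $C$ advertised in the statement.
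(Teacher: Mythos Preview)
Your proposal is correct and follows essentially the same route as the paper: a union bound over the finitely many atypicality events, Chebyshev's inequality for each, and the finite-variance input supplied by \cref{lemma:MomentsExistEnsemble}. The Cauchy--Schwarz detour you sketch is unnecessary (and would require control of $\Lambda_X(2\lambda,\cdot)$, which is not guaranteed on the whole interval), since \cref{lemma:MomentsExistEnsemble} already gives $\Exp_Y\Exp_{Q^\lambda_{X|Y}}\bigl[|\gam(X|Y)|^{2j}\bigr]<\infty$ directly for $j\le 3$, which bounds $\Var_Y[W(\lambda,Y)]$ via your Jensen step without further work.
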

\begin{proof}
    Please see Appendix \ref{section:TypicalityProbBound}.
\end{proof}
We can then refer to the $\epsilon$-typical set $\mathcal{U}^{n, \epsilon}_{\text{typical}} = \left(\mathcal{U}^{n, \epsilon}_{\text{atypical}}\right)^c$.
The notion of typically we imposed we imposed allow us to prove some strong regularity conditions for the tilted moments that we will need to prove large deviations results.
\begin{lemma} \label{lemma:regularity}
For all sufficiently small values of $\epsilon>0$, there exist positive constants $\underline{m}_2, \overline{m}_2, \overline{m}_3$ s.t. for all $n$, all $\lambda \in [\underline{\lambda}, \overline{\lambda}]$ and all $\epsilon$-typical $(y^n,z)$, the following hold:
\begin{enumerate}
    \item $\underline{m}_2 \le \frac{s^2_n(\lambda, y^n)}{n} \le \overline{m}_2$.
    \item $\frac{\muthird(\lambda, y^n)}{n} \le \overline{m}_3$.
\end{enumerate}
\end{lemma}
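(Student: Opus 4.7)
The plan is to deduce both bounds from the grid-point typicality conditions (\ref{eq:FiniteNet}) and (\ref{eq:FiniteNetRaw}) combined with a Lipschitz-in-$\lambda$ argument that extends the bounds uniformly across $[\underline{\lambda}, \overline{\lambda}]$. The upper bounds are the easy half: $\muthird(\lambda, y^n)/n$ is literally the empirical averaged third absolute moment controlled by (\ref{eq:FiniteNet}) with $j = 3$, while $s_n^2(\lambda, y^n)/n = \frac{1}{n}\sum_i \Var_{Q^\lambda_{X|y_i}}[\gam] \le \frac{1}{n}\sum_i \Exp_{Q^\lambda_{X|y_i}}[\gam(X|y_i)^2]$ is dominated by the empirical second raw moment controlled by (\ref{eq:FiniteNetRaw}) with $j = 2$. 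At each grid point $\lambda^* \in K_\epsilon$, both empirical quantities lie within $\epsilon$ of their population counterparts, which are finite and continuous on the compact operating interval by Lemma \ref{lemma:MomentsExistEnsemble}, and hence uniformly bounded.

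To push the grid-point bounds to every $\lambda \in [\underline{\lambda}, \overline{\lambda}]$, I would differentiate under the integral to obtain
\[
\frac{\partial}{\partial \lambda} \Exp_{Q^\lambda_{X|y}}[|\gam(X|y)|^j] = \Cov_{Q^\lambda_{X|y}}(|\gam|^j, \gam),
\]
and bound this by $(\Exp_{Q^\lambda_{X|y}}[\gam^{2j}] \cdot \Exp_{Q^\lambda_{X|y}}[\gam^2])^{1/2}$ via Cauchy-Schwarz. Averaging over $i$ and applying Cauchy-Schwarz a second time reduces the averaged derivative to averaged tilted moments that are uniformly bounded on $[\underline{\lambda}, \overline{\lambda}]$ by Lemma \ref{lemma:MomentsExistEnsemble} (whose hypotheses exploit the exponential-moment assumption in Section \ref{section:AdditionalAssume}). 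This yields a Lipschitz constant $L$ independent of $y^n$ and $n$, so the closest $\lambda^* \in K_\epsilon$ satisfies $|\lambda - \lambda^*| \le \epsilon$ and the empirical averaged moment at $\lambda$ differs from that at $\lambda^*$ by at most $L\epsilon$, giving $\overline{m}_2$ and $\overline{m}_3$.

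For the lower bound $\underline{m}_2$, I would first show that the population function $\phi(\lambda) := \Exp_Y[\Var_{Q^\lambda_{X|Y}}[\gam]]$ is strictly positive and continuous on $[\underline{\lambda}, \overline{\lambda}]$: non-singularity of $P_{X|Y}$ guarantees that $\gam(\cdot \mid y)$ is not $P_{X|y}$-a.s.\ constant for $P_Y$-almost every $y$, so $\Var_{Q^\lambda_{X|y}}[\gam] > 0$ pointwise in $\lambda$ and hence $\phi(\lambda) > 0$; continuity follows from dominated convergence using the uniform moment bounds from Lemma \ref{lemma:MomentsExistEnsemble}, and compactness then yields $\inf_\lambda \phi(\lambda) =: 2\underline{m}_2 > 0$. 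The main obstacle is transferring this to the empirical average $\frac{1}{n}\sum_i \Var_{Q^\lambda_{X|y_i}}[\gam]$: decomposing it as $\frac{1}{n}\sum_i \Exp_{Q^\lambda}[\gam^2] - \frac{1}{n}\sum_i (\Exp_{Q^\lambda}[\gam])^2$ introduces the second term, which is not a typicality event and on which Jensen's inequality from the $j=1$ raw-moment condition points in the wrong direction for a lower bound. My fix is to enlarge $\mathcal{U}^{n,\epsilon}_{\text{atypical}}$ by a Chebyshev event on $|\frac{1}{n}\sum_i \Lambda''_X(\lambda^*, y_i) - \phi(\lambda^*)|$ at each $\lambda^* \in K_\epsilon$, whose probability still decays as $O(1/n)$ under the exponential-moment hypothesis (so Lemma \ref{lemma:TypicalityProbBound} is preserved), and to handle the $\lambda \notin K_\epsilon$ case by a further Lipschitz argument applied to $\Lambda''_X(\lambda, y)$. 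For $\epsilon$ small enough this yields $s_n^2/n \ge \underline{m}_2$, completing the proof.
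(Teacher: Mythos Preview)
Your Lipschitz strategy for the upper bounds has a circularity gap. When you differentiate and apply Cauchy--Schwarz, the bound you obtain on $\frac{1}{n}\sum_i \Cov_{Q^\lambda_{X|y_i}}(|\gam|^j,\gam)$ is in terms of \emph{empirical} tilted moments $\frac{1}{n}\sum_i \Exp_{Q^\lambda_{X|y_i}}[|\gam|^{2j}]$ at a generic $\lambda$ for the specific sequence $y^n$. Lemma~\ref{lemma:MomentsExistEnsemble}, however, bounds only the \emph{population} quantity $\Exp_Y\Exp_{Q^\lambda_{X|Y}}[|\gam|^k]$; it says nothing about the empirical average for a fixed typical $y^n$. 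So your Lipschitz constant $L$ is not known to be uniform in $y^n$, and you cannot use it to move off the grid. Worse, for $j=3$ the Cauchy--Schwarz step requires empirical sixth tilted moments, which are not among the typicality events in (\ref{eq:FiniteNet})--(\ref{eq:FiniteNetRaw}); closing this would require enlarging the typical set with higher-order events and still leaves you needing a $y^n$-uniform bound at off-grid $\lambda$.

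The paper sidesteps this entirely via a stochastic-dominance lemma (Lemma~\ref{lemma:MomentBounds}): for every $y$ and every $\lambda\in[\underline{\lambda},\overline{\lambda}]$ one has the pointwise inequality
\[
\Exp_{Q^\lambda_{X|y}}\!\big[|\gam|^k\big]\;\le\;\Exp_{Q^{\underline{\lambda}}_{X|y}}\!\big[|\gam|^k\big]+\Exp_{Q^{\overline{\lambda}}_{X|y}}\!\big[|\gam|^k\big],
\]
obtained by splitting $|\gam|^k$ into its increasing and decreasing parts and using that the tilted laws are stochastically ordered in $\lambda$. Averaging over $i$ and invoking typicality only at the two endpoints $\underline{\lambda},\overline{\lambda}\in K_\epsilon$ then gives $\overline{m}_2$ and $\overline{m}_3$ immediately, with no Lipschitz step and no higher-order typicality.

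For the lower bound on $s_n^2/n$, your approach and the paper's coincide in spirit: Taylor-expand $\Lambda_X''(\lambda,y^n)$ about the nearest grid point $\hat{\lambda}$, so that
\[
\tfrac{1}{n}s_n^2(\lambda,y^n)=\tfrac{1}{n}\sum_i \Lambda_X''(\hat{\lambda},y_i)+(\lambda-\hat{\lambda})\,\tfrac{1}{n}\sum_i\Lambda_X'''(\tilde{\lambda},y_i),
\]
and bound the remainder by $\overline{m}_3\epsilon$ using the upper bound already established. Your observation that the concentration of $\frac{1}{n}\sum_i\Lambda_X''(\hat{\lambda},y_i)$ about $\Exp_Y\Var_{Q^{\hat{\lambda}}_{X|Y}}[\gam\mid Y]$ is not delivered by the raw-moment events alone is well taken, and your proposed fix---adding a Chebyshev event for $\Lambda_X''(\lambda^*,\cdot)$ at each $\lambda^*\in K_\epsilon$---is a clean way to close that step while preserving the $O(1/n)$ atypical-probability bound.
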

\begin{proof}
    Please see Appendix~\ref{section:RegProof}
\end{proof}

\begin{definition} \label{def:EpsilonChoice}
    We will choose a value of $\epsilon$ that ensures that the results of Lemma \ref{lemma:regularity} hold along with
    \begin{align}
        \MIe{X}{Y} + 3\epsilon &\le \Exp_Y\Exp_{Q^{\overline{\lambda}}_{X|Y}}\left[ \gam(X|Y) | Y \right],\\
        \MIe{X}{Y} - 3\epsilon &\ge \Exp_Y\Exp_{Q^{\underline{\lambda}}_{X|Y}}\left[ \gam(X|Y) | Y \right].
    \end{align}
    All typical $(y^n, z)$ henceforth will be assumed to be typical w.r.t. this choice of $\epsilon$, and we therefore drop the superscript $\epsilon$ going forward.
\end{definition}

\begin{definition} \label{def:YucelLB}
    For any $\lambda > 0, s>0, \mu \ge 0, n>0$, we define $\underline{M}_n(\lambda, s, \mu)$ to be
    \begin{align}
        \underline{M}_n(\lambda, s, \mu) &= \frac{(1+2t)\gamma(\lambda, s, \mu)\sqrt{n}}{2\lambda\sqrt{2\pi}s\exp(2t)}, \text{ where}\\
        \gamma(\lambda, s, \mu) &= 1 - \frac{1 + (1 + 2t)^2}{\lambda(1 + 2t)\sqrt{e}s}, \text{ and}\\
        t &= \lambda 2\sqrt{2\pi}\frac{\mu}{s^2}.\\
    \end{align}
    Let $n_0$ denote the threshold for $n$ s.t. $\gamma(\lambda, s, \mu) > \frac{1}{2}$ for all $\frac{s^2}{n} \in (\underline{m}_2, \overline{m}_2)$, $\frac{\mu}{n} \in (0, \overline{m}_3)$ and $\lambda \in [\underline{\lambda}, \overline{\lambda}]$. Then, as a corollary of Lemma \ref{lemma:regularity}, we know that $\underline{M}_n(\lambda, s(\lambda, y^n), \muthird(\lambda, y^n))$ is lower bounded by a positive constant for all $n>n_0$, $\lambda \in [\underline{\lambda}, \overline{\lambda}]$ and typical $(y^n,z)$. Let us denote this constant by $\underline{M}$.
\end{definition}

\subsection{Auxiliary Results}

\begin{lemma} \label{lemma:MomentsExistEnsemble}
    There exists an interval $[\underline{\lambda}, \beta]$ where $\underline{\lambda} < 1 < \beta \le 1 + \delta$ and the following hold:
    \begin{enumerate}
        \item  
        \begin{equation}
            \Exp_Y\Exp_{Q^\lambda_{X|Y}}\left[ \left| \gam(X|Y) \right|^k \Big| Y \right] < \infty, \text{ for all } k \in \{1,2,3,4,5,6\} \text{ and }\lambda \in [\underline{\lambda}, \beta].\label{eq:IntervalExist1}
        \end{equation}
        \item 
        \begin{equation}
            \inf\limits_{[\underline{\lambda}, \beta]} \Exp_Y\Var_{Q^\lambda_{X|Y}}\left[ \gam(X|Y) \Big| Y \right] > 0. \label{eq:IntervalExist2}
        \end{equation}
    \end{enumerate}
\end{lemma}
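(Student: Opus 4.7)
The plan is to verify both parts of the lemma at $\lambda = 1$, where $Q^1_{X|Y} = P_{X|Y}$ (since $\Lambda_X(1, Y) = 0$ almost surely), and then extend to a closed subinterval $[\underline{\lambda}, \beta] \subset (0, 1+\delta)$ containing $1$ in its interior by a convexity argument and a dominated-convergence continuity argument.

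First I would handle the moment condition at $\lambda = 1$, where the quantity reduces to $\Exp_{XY}[|\gam|^k]$; this is finite for every $k$, since $\Exp_{XY}[e^{\delta \gam}] < \infty$ controls $\gam_+$ via $x^k \le (k!/\delta^k) e^{\delta x}$ for $x \ge 0$, and the observation $\Exp_{XY}[e^{-\gam}] \le 1$ (which equals $\Exp_Y[P_X(\mathrm{supp}(P_{X|Y}))]$) controls $\gam_-$ analogously. To extend to $\lambda$ near $1$, I would change measure from $Q^\lambda_{X|Y}$ back to $P_{X|Y}$ on its support, writing
\begin{align*}
\Exp_Y \Exp_{Q^\lambda_{X|Y}}\!\left[|\gam|^k \Big| Y\right] = \Exp_{XY}\!\left[|\gam|^k e^{(\lambda-1)\gam - \Lambda_X(\lambda, Y)}\right],
\end{align*}
and split by Cauchy--Schwarz into $\Exp_{XY}[|\gam|^{2k} e^{2(\lambda-1)\gam}]^{1/2}$ and $\Exp_Y[e^{-2\Lambda_X(\lambda, Y)}]^{1/2}$. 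The first factor is finite for $2|\lambda-1| < \min(\delta, 1)$ by the same exponential-moment assumptions. For the second, the convexity of $\Lambda_X(\cdot, y)$ together with $\Lambda_X(1, y) = 0$ yields the tangent-line bound $\Lambda_X(\lambda, y) \ge (\lambda-1)\Exp_{X|y}[\gam(X|y)]$, whence by conditional Jensen,
\begin{align*}
\Exp_Y[e^{-2\Lambda_X(\lambda, Y)}] \le \Exp_Y\!\left[e^{-2(\lambda-1)\Exp_{X|Y}[\gam|Y]}\right] \le \Exp_{XY}[e^{-2(\lambda-1)\gam}],
\end{align*}
which is again finite for $2|\lambda - 1| \le \min(\delta, 1)$ by the assumed exponential moments.

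For the variance condition, at $\lambda = 1$ we have $\Var_{Q^1_{X|Y}}[\gam|Y] = \Var_{X|Y}[\gam(X|Y)|Y]$, which by the non-singularity hypothesis is strictly positive on a set of positive $P_Y$-measure, so $\Exp_Y[\Var_{X|Y}[\gam|Y]] > 0$. Recognising $\Var_{Q^\lambda_{X|y}}[\gam] = \frac{\partial^2}{\partial \lambda^2}\Lambda_X(\lambda, y)$ and using the tilted second-moment bounds from the previous step as a dominating function, the map $\lambda \mapsto \Exp_Y[\Var_{Q^\lambda_{X|Y}}[\gam|Y]]$ is continuous on an open neighborhood of $\lambda = 1$ by dominated convergence. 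Continuity on a closed subneighborhood of $1$ together with strict positivity at $1$ then yields a strictly positive infimum by compactness. I would take $[\underline{\lambda}, \beta]$ to be the intersection of the two neighborhoods produced above (contained in $(0, 1+\delta)$) to conclude.

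The main obstacle is the uniform control of $\Exp_Y[e^{-2\Lambda_X(\lambda, Y)}]$ in the moment step: a naive pointwise Taylor expansion of $\Lambda_X(\lambda, y)$ around $\lambda = 1$ has slope $\Exp_{X|y}[\gam|y]$, which is unbounded as a function of $y$, so one cannot simply exponentiate the expansion. The convexity-plus-Jensen reduction sketched above is what sidesteps this difficulty, trading the pointwise expansion for a clean integrability bound that depends only on $\Exp_{XY}[e^{\theta\gam}]$ for $\theta$ in the pre-assumed range.
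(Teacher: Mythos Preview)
Your approach is correct and matches the paper's: both proofs hinge on the convexity tangent bound $\Lambda_X(\lambda,y)\ge(\lambda-1)\Lambda_X'(1,y)$ followed by conditional Jensen to reduce control of $e^{-c\,\Lambda_X(\lambda,Y)}$ to the assumed exponential moments $\Exp_{XY}[e^{\theta\gam}]$, and then conclude by continuity at $\lambda=1$. The paper packages this as a single continuity result for $\lambda\mapsto\Exp_{P_YQ^\lambda_{X|Y}}[\Theta]$ (valid whenever $\Exp_{XY}[\Theta^2]<\infty$) and specializes $\Theta$ for each part, whereas you bound the moments directly in part~1 and argue continuity of the averaged variance separately in part~2; the substance is the same.

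One small imprecision worth tightening: in your variance step, the ``tilted second-moment bounds from the previous step'' are bounds on $\Exp_Y[\,\cdot\,]$, not pointwise-in-$Y$ estimates, so they are not literally a dominating function for dominated convergence over $Y$. What you actually need is the pointwise envelope
\[
\Var_{Q^\lambda_{X|y}}\!\left[\gam\right]\;\le\;\Exp_{P_{X|y}}\!\left[(\gam)^2 e^{\eta|\gam|}\right]\cdot e^{\eta\,\Lambda_X'(1,y)}\qquad(|\lambda-1|\le\eta),
\]
which comes from exactly the same convexity-plus-Jensen ingredients you already used, and whose $P_Y$-integrability follows by one more Cauchy--Schwarz. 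This is a cosmetic fix rather than a gap in the idea.
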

\begin{proof}
    Both the conclusions follow from a continuity argument. 
    
    Without loss of generality, assume $\delta < 1$. For any arbitrary random variable $\Theta$ s.t. $\Exp_{XY}\left[ \Theta^2 \right] < \infty$ and $\lambda > 0$, consider
    \begin{align}
        &\; \lim\limits_{\lambda \to 1} J(\lambda)\\
        &= \lim\limits_{\lambda \to 1} \left| \Exp_{P_YQ^1_{X|Y}}\left[ \Theta \right] - \Exp_{P_YQ^\lambda_{X|Y}}\left[ \Theta \right] \right|^2\\
        &= \lim\limits_{\lambda \to 1} \left| \Exp_{XY}\left[ \Theta \right] - \Exp_{XY}\left[ \frac{dQ^\lambda_{X|Y}}{dP_{X|Y}} \Theta \right] \right|^2\\
        &= \lim\limits_{\lambda \to 1} \left| \Exp_{XY}\left[ \left( 1 - \exp\left( (\lambda - 1) \gam - \empLam(\lambda, Y) \right) \right) \Theta \right] \right|^2\\
        &\le \lim\limits_{\lambda \to 1} \;\Exp_{XY}\left[ \left( 1 - \exp\left( (\lambda - 1) \gam - \empLam(\lambda, Y) \right) \right)^2 \right] \Exp_{XY}\left[ \Theta^2\right] \label{eq:CSBound}\\
        &= \lim\limits_{\lambda \to 1} \;\Exp_{XY}\left[ 1 + \exp\left( 2(\lambda - 1) \gam - 2\empLam(\lambda, Y) \right) - 2\exp\left( (\lambda - 1) \gam - \empLam(\lambda, Y) \right) \right] \times \nonumber\\
        &\phantom{\lim\limits_{\lambda \to 1} \Exp_ff} \Exp_{XY}\left[ \Theta^2\right]. \label{eq:Split_terms} 
    \end{align}
    We can bound the two $\lambda$ dependant expectations in (\ref{eq:Split_terms}) separately. First, consider
    \begin{align}
        &\lim\limits_{\lambda \to 1} \;\Exp_{XY}\left[ 2\exp\left( (\lambda - 1) \gam - \empLam(\lambda, Y) \right) \right]\\
        &\ge 2\exp\left( \lim\limits_{\lambda \to 1} \Exp_{XY}\left[ (\lambda - 1) \cdot  \gam - \empLam(\lambda, Y)  \right]\right) \label{eq:JensenGe}\\
        &= 2\exp\left( - \lim\limits_{\lambda \to 1} E_Y\left[\empLam(\lambda, Y) \right] \right). \label{eq:SecondExpectLB'}
    \end{align}
    Here, (\ref{eq:JensenGe}) follows from applying Jensen's inequality. If we consider all $\lambda$ within $[1 - \delta, 1 + \delta]$, from the convexity of $\empLam(\lambda, y)$ (See Lemma 2.2.5 \cite{dembo2009large}) for all $y$, we can obtain
    \begin{align}
        \empLam(\lambda, y) \le \empLam(1 + \delta, y) + \empLam(1 - \delta, y).
    \end{align}
    Therefore, we can apply the dominated convergence theorem to (\ref{eq:SecondExpectLB'}) to take the limit inside the expectation to obtain
    \begin{align}
        \lim\limits_{\lambda \to 1} \;\Exp_{XY}\left[ 2\exp\left( (\lambda - 1) \gam - \empLam(\lambda, Y) \right) \right] = 2. \label{eq:SecondExpectLB}
    \end{align}
    Next, consider the other expectation from (\ref{eq:Split_terms}):
    \begin{align}
        & \lim\limits_{\lambda \to 1} \Exp_{XY} \left[\exp\left( 2(\lambda - 1) \gam - 2\empLam(\lambda, Y) \right) \right]\\
        &\le \sqrt{\lim\limits_{\lambda \to 1}\Exp_{XY}\left[ \exp\left( 4(\lambda - 1) \gam \right)\right]\Exp_{Y}\left[ \exp\left( -4\empLam(\lambda, Y)  \right)\right]} \label{eq:CSagain}\\
        &\le \sqrt{\lim\limits_{\lambda \to 1}\Exp_{XY}\left[ \exp\left( 4(\lambda - 1) \gam \right)\right]\Exp_{Y}\left[ \exp\left(  -4(\lambda - 1)\empLam'(1, Y)  \right)\right]}
        \label{eq:DC}
    \end{align}
    Here, (\ref{eq:CSagain}) follows from the Cauchy-Schwarz inequality. To obtain (\ref{eq:DC}), we have used the convexity of $\empLam$:
    \begin{align}
        \empLam(\lambda,y) &\ge \empLam(1,y) + (\lambda - 1)\empLam'(1, y)\\
        &= (\lambda - 1)\empLam'(1, y), \text{ for all } y.
    \end{align}
    We can then show that we can take the limit inside the expectation for both the terms in (\ref{eq:DC}) by the dominated convergence theorem. If we consider $\lambda \in \left( 1 - \frac{\delta}{4}, 1 + \frac{\delta}{4}\right)$, we can obtain the following uniform bounds
    \begin{align}
        \exp\left( 4(\lambda - 1) \gam(x|y) \right) &\le \exp\left( \delta \gam(x|y) \right) + \exp\left( -\delta \gam(x|y) \right),\text{ for all }x,y, \label{eq:firstDC}\\
        \exp\left( -4(\lambda - 1)\empLam'(1, y)  \right) &\le \exp\left( \delta \empLam'(1, y) \right) + \exp\left( -\delta \empLam'(1, y) \right),\text{ for all } y. \label{eq:secondDC}
    \end{align}
    The right hand side of (\ref{eq:firstDC}) is $P_{XY}$-integrable by definition. For (\ref{eq:secondDC}), using Jensen's inequality, we obtain
    \begin{align}
        &\Exp_Y\left[ \exp\left( \delta\empLam'(1, Y) \right)\right]\\
        &= \Exp_Y\left[ \exp\left( \delta\Exp_{X|Y}\left[ \gam \Big| Y\right] \right)\right]\\
        &\le \Exp_{XY}\left[ \exp\left( \delta\gam \right) \right] < \infty,
    \end{align}
    and similarly for $-\delta$. Thus, we can use dominated convergence to evaluate both limits, resulting in
    \begin{align}
        \lim\limits_{\lambda \to 1} \Exp_{XY} \left[\exp\left( 2(\lambda - 1) \gam - 2\empLam(\lambda, Y) \right) \right] = 1. \label{eq:I1}
    \end{align}
   Combining the results from (\ref{eq:SecondExpectLB}) and (\ref{eq:I1}) with (\ref{eq:Split_terms}), we obtain $\lim\limits_{\lambda \to 1}J(\lambda) = 0$.
   
    We have shown that $\Exp_{P_YQ^\lambda_{X|Y}}\left[ \Theta \right]$ is continuous at $\lambda = 1$. Taking $\Theta$ to be $\left|\gam\right|^k$ allows us to find $k$ intervals that satisfy (\ref{eq:IntervalExist1}) and taking $\Theta$ to be $\left(\gam - \Exp_{X|Y}\left[ \gam \Big| Y \right]\right)^2$ \footnote{For non-singular channels, $\Exp_Y\Var_{X|Y}\left[ \gam(X|Y) \Big| Y \right] > 0$.} allows to to similarly find an interval that satisfies (\ref{eq:IntervalExist2}). Choosing the smallest of these intervals suffices to complete the proof. 
\end{proof}

\begin{lemma} \label{lemma:StochasticDominance}
    For any $y \in \mathcal{Y}$, and $\lambda_1, \lambda_2 \in [\underline{\lambda}, \overline{\lambda}]$ s.t. $\lambda_2 > \lambda_1$, $\gam(X|y)$ under $Q^{\lambda_2}_{X|y}$ stochastically dominates $\gam(X|y)$ under $Q^{\lambda_1}_{X|y}$.
\end{lemma}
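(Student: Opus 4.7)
The plan is to exploit the monotone likelihood ratio between $Q^{\lambda_2}_{X|y}$ and $Q^{\lambda_1}_{X|y}$. Using the definition \eqref{eq:TiltingMeasure} of the tilted measures, a direct computation on the support $\{x : \frac{dP_{X|Y}}{dP_X}(x|y) > 0\}$ gives
\begin{align*}
L(x) \defeq \frac{dQ^{\lambda_2}_{X|y}}{dQ^{\lambda_1}_{X|y}}(x) = \exp\bigl((\lambda_2 - \lambda_1) \gam(x|y) - (\Lambda_X(\lambda_2, y) - \Lambda_X(\lambda_1, y))\bigr),
\end{align*}
which is a strictly increasing function of $\gam(x|y)$ since $\lambda_2 > \lambda_1$. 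Moreover, $\Exp_{Q^{\lambda_1}_{X|y}}[L(X)] = 1$.

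Fix $t \in \mathbb{R}$ and let $\phi(x) = \Ind[\gam(x|y) > t]$. To establish stochastic dominance, the goal is to show
\begin{align*}
\Exp_{Q^{\lambda_2}_{X|y}}[\phi(X)] - \Exp_{Q^{\lambda_1}_{X|y}}[\phi(X)] = \Exp_{Q^{\lambda_1}_{X|y}}\bigl[\phi(X)(L(X) - 1)\bigr] \ge 0.
\end{align*}
Since $L$ is strictly increasing in $\gam(x|y)$ and $\Exp_{Q^{\lambda_1}_{X|y}}[L(X) - 1] = 0$, there is a threshold $c^* \in \mathbb{R} \cup \{-\infty, +\infty\}$ such that $L(x) - 1 \le 0$ when $\gam(x|y) \le c^*$ and $L(x) - 1 \ge 0$ when $\gam(x|y) \ge c^*$.

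I then split into two cases. If $t \ge c^*$, then $\phi(x) = 1$ forces $\gam(x|y) > t \ge c^*$, so $\phi(x)(L(x)-1) \ge 0$ pointwise and the expectation is nonnegative. If $t < c^*$, I use the equivalent rewrite
\begin{align*}
\Exp_{Q^{\lambda_1}_{X|y}}\bigl[\phi(X)(L(X)-1)\bigr] = -\Exp_{Q^{\lambda_1}_{X|y}}\bigl[(1-\phi(X))(L(X)-1)\bigr],
\end{align*}
and note that $1 - \phi(x) = 1$ implies $\gam(x|y) \le t < c^*$, hence $L(x) - 1 \le 0$, so the right-hand side is again nonnegative. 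Combining the two cases yields the stochastic dominance claim. I do not anticipate a serious obstacle here; the only subtlety is handling the degenerate endpoints $c^* \in \{-\infty, +\infty\}$, which corresponds to $L \equiv 1$ (trivial) or to ratios that never cross $1$ (impossible given the mean-one normalization on a non-degenerate tilt).
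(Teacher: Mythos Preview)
Your proof is correct. Both your argument and the paper's exploit the same monotone likelihood ratio structure of exponential tilting, but with different bookkeeping: the paper computes the odds ratio $Q^{\lambda}(\gam \ge a)/Q^{\lambda}(\gam < a)$ directly and shows it is nondecreasing in $\lambda$ by bounding the extra factor $e^{(\lambda_2-\lambda_1)\gam}$ above and below by $e^{(\lambda_2-\lambda_1)a}$ on the two half-lines, then uses monotonicity of $z\mapsto z/(1-z)$; you instead run the textbook MLR-implies-FOSD covariance argument with a crossing threshold $c^*$. The two are equivalent in substance. Your route has the small advantage that it immediately yields the comparison for all nondecreasing test functions (which is precisely what the paper states as a separate corollary right after the lemma), and the endpoint concern you flag does not actually arise here since $c^* = (\Lambda_X(\lambda_2,y)-\Lambda_X(\lambda_1,y))/(\lambda_2-\lambda_1)$ is always a finite real number.
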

\begin{proof}
    For any $a>0$, consider 
    \begin{align}
        &\frac{Q^{\lambda_2}_{X|y}\left( \gam(X|y) \ge a\right)}{1 - Q^{\lambda_2}_{X|y}\left( \gam(X|y) \ge a\right)} \\
        &= \frac{\Exp_{Q^{\lambda_2}_{X|y}}\left[\mathbf{1}\left(\gam(X|y) \ge a\right)\right]}{ \Exp_{Q^{\lambda_2}_{X|y}}\left[\mathbf{1}\left(\gam(X|y) < a\right)\right]} \\
        &= \frac{\Exp_{P_X}\left[\mathbf{1}\left(\frac{dP_{X|Y}}{dP_X}(X|y) > 0\right)\exp\left(\lambda_2\gam(X|y) - \Lambda_X(\lambda_2, y)\right) \mathbf{1}\left(\gam(X|y) \ge a\right)\right]}{\Exp_{P_X}\left[\mathbf{1}\left(\frac{dP_{X|Y}}{dP_X}(x|y) > 0\right)\exp\left(\lambda_2 \gam(X|y) - \Lambda_X(\lambda_2, y)\right)\mathbf{1}\left(\gam(X|y) < a\right)\right]}\\
        &= \frac{\Exp_{P_X}\left[\mathbf{1}\left(\frac{dP_{X|Y}}{dP_X}(X|y) > 0\right)\exp\left((\lambda_2 - \lambda_1)\gam(X|y) \right)\exp\left( \lambda_1 \gam(X|y) \right) \mathbf{1}\left( \gam(X|y) \ge a\right)\right]}{\Exp_{P_X}\left[\mathbf{1}\left(\frac{dP_{X|Y}}{dP_X}(X|y) > 0\right)\exp\left((\lambda_2 - \lambda_1)\gam(X|y) \right)\exp\left( \lambda_1 \gam(X|y) \right) \mathbf{1}\left(\gam(X|y) < a\right)\right]}\\
        &\ge \frac{\Exp_{P_X}\left[\mathbf{1}\left(\frac{dP_{X|Y}}{dP_X}(X|y) > 0\right)\exp\left((\lambda_2 - \lambda_1)a \right)\exp\left( \lambda_1 \gam(X|y) \right) \mathbf{1}\left(\gam(X|y) \ge a\right)\right]}{\Exp_{P_X}\left[\mathbf{1}\left(\frac{dP_{X|Y}}{dP_X}(X|y) > 0\right)\exp\left((\lambda_2 - \lambda_1)a \right)\exp\left( \lambda_1 \gam(X|y) \right) \mathbf{1}\left(\gam(X|y) < a\right)\right]}\\
        &= \frac{\Exp_{P_X}\left[\mathbf{1}\left(\frac{dP_{X|Y}}{dP_X}(X|y) > 0\right)\exp\left( \lambda_1 \gam(X|y) \right) \mathbf{1}\left( \gam(X|y) \ge a\right) \right]}{\Exp_{P_X}\left[\mathbf{1}\left(\frac{dP_{X|Y}}{dP_X}(X|y) > 0\right)\exp\left( \lambda_1 \gam(X|y) \right) \mathbf{1}\left(\gam(X|y) < a\right)\right]}\\
        &= \frac{Q^{\lambda_1}_{X|y}\left( \gam(X|y) \ge a\right)}{1 - Q^{\lambda_1}_{X|y}\left( \gam(X|y) \ge a\right)}.
    \end{align}
    This implies that $Q^{\lambda_2}_{X|y}\left( \gam(X|y) \ge a\right) \ge Q^{\lambda_1}_{X|y}\left( \gam(X|y) \ge a\right)$ because the function $\frac{z}{1-z}$ is strictly increasing on its domain.
\end{proof}
\begin{corollary}
    For any $y \in \mathcal{Y}$, $\lambda_1, \lambda_2 \in [\underline{\lambda}, \overline{\lambda}]$ s.t. $\lambda_2 \ge \lambda_1$ and non-decreasing function $h$,
    \begin{equation}
        \Exp_{Q^{\lambda_2}_{X|y}}\left[h\left(\gam(X|y)\right)\right] \ge \Exp_{Q^{\lambda_1}_{X|y}}\left[h\left(\gam(X|y)\right)\right].
    \end{equation}
\end{corollary}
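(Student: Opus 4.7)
The corollary follows routinely from Lemma \ref{lemma:StochasticDominance} by the standard transfer from stochastic dominance to expectations of monotone functionals. My plan has three short steps.

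First, fix $y \in \mathcal{Y}$ and, for $i \in \{1,2\}$, write $\bar F_i(a) = Q^{\lambda_i}_{X|y}\bigl(\gam(X|y) \geq a\bigr)$. Lemma \ref{lemma:StochasticDominance} gives $\bar F_2(a) \geq \bar F_1(a)$ for every $a \in \mathbb{R}$. For a non-decreasing $h$, each super-level set $\{a : h(a) \geq t\}$ is an upper half-line, so the same dominance transfers to $h(\gam(X|y))$:
\[
Q^{\lambda_2}_{X|y}\bigl(h(\gam(X|y)) \geq t\bigr) \;\geq\; Q^{\lambda_1}_{X|y}\bigl(h(\gam(X|y)) \geq t\bigr) \quad \text{for all } t \in \mathbb{R}.
\]

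Second, I would apply the layer-cake identity
\[
\Exp[W] \;=\; \int_{0}^{\infty} \Prob(W \geq t)\,dt \;-\; \int_{-\infty}^{0} \bigl(1 - \Prob(W \geq t)\bigr)\,dt,
\]
valid for any real-valued $W$ with finite expectation, to $W = h(\gam(X|y))$ under each of the two tilted measures. The pointwise inequality on $\Prob(W \geq t)$ makes the first integral larger and the second integral smaller under $Q^{\lambda_2}_{X|y}$ than under $Q^{\lambda_1}_{X|y}$, yielding the claimed inequality.

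The only nuisance I foresee is integrability: a non-decreasing $h$ need not be bounded, so strictly speaking the statement should be read as valid whenever both expectations exist. If needed, this is handled by truncating $h$ to $h_N = (h \wedge N) \vee (-N)$, applying the bounded case, and passing $N \to \infty$ by monotone convergence, with the tilted-moment bounds of Lemma \ref{lemma:MomentsExistEnsemble} available to dominate when $h$ has at most polynomial growth in $\gam$. I do not anticipate a substantive obstacle; the content of the corollary is already packaged inside the preceding lemma, and this step is just the standard dominance-to-expectation lifting.
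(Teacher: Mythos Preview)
Your proposal is correct and is essentially the same approach as the paper's: the paper simply invokes Lemma~\ref{lemma:StochasticDominance} together with the standard fact that first-order stochastic dominance implies the expectation inequality for non-decreasing functions (citing a textbook proposition), while you unpack that standard fact via the layer-cake representation. Your remark about integrability is a fair caveat but not a divergence in method.
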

\begin{proof}
    This result follows from Lemma \ref{lemma:StochasticDominance} and the properties of stochastic dominance (see Proposition 6.D.1 in \cite{andreu1995microeconomic}).
\end{proof}

\begin{lemma} \label{lemma:MomentBounds}
    For any $k>0$ and $y \in \mathcal{Y}$,
    \begin{equation}
        \sup\limits_{\lambda \in [\underline{\lambda}, \overline{\lambda}]}\Exp_{Q^{\lambda}_{X|y}}\left[ \left|\gam(X|y)\right|^k\right] \le \Exp_{Q^{\overline{\lambda}}_{X|y}}\left[ \left|\gam(X|y) \right| ^k \right] + \Exp_{Q^{\underline{\lambda}}_{X|y}}\left[\left|\gam(X|y) \right| ^k\right].
    \end{equation}
\end{lemma}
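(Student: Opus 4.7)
The plan is to split $|\gam(X|y)|^k$ into its positive and negative parts and apply the monotone-functional form of stochastic dominance (the corollary of Lemma \ref{lemma:StochasticDominance}) separately to each piece. Concretely, I would define $h_+(\ell) = \ell^k \mathbf{1}[\ell \ge 0]$ and $h_-(\ell) = (-\ell)^k \mathbf{1}[\ell < 0]$, so that $|\ell|^k = h_+(\ell) + h_-(\ell)$ and both $h_\pm(\ell) \le |\ell|^k$ pointwise. Since $k>0$, the function $h_+$ is non-decreasing on all of $\mathbb{R}$ (it equals $\ell^k$ on $[0,\infty)$ and $0$ elsewhere), while $h_-$ is non-increasing (it equals $0$ on $[0,\infty)$ and $(-\ell)^k$ on $(-\infty,0)$), so $-h_-$ is non-decreasing.

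For any $\lambda \in [\underline{\lambda},\overline{\lambda}]$, applying the corollary to $h_+$ with $\lambda \le \overline{\lambda}$ gives
\begin{equation*}
\Exp_{Q^\lambda_{X|y}}\!\left[h_+(\gam(X|y))\right] \le \Exp_{Q^{\overline{\lambda}}_{X|y}}\!\left[h_+(\gam(X|y))\right],
\end{equation*}
and applying it to the non-decreasing function $-h_-$ with $\underline{\lambda} \le \lambda$, then multiplying by $-1$, yields
\begin{equation*}
\Exp_{Q^\lambda_{X|y}}\!\left[h_-(\gam(X|y))\right] \le \Exp_{Q^{\underline{\lambda}}_{X|y}}\!\left[h_-(\gam(X|y))\right].
\end{equation*}
Summing these two inequalities, using $|\gam(X|y)|^k = h_+(\gam(X|y)) + h_-(\gam(X|y))$ on the left and the trivial domination $h_\pm \le |\cdot|^k$ on the right, produces the claimed pointwise bound; taking the supremum over $\lambda$ then finishes the proof, as the right-hand side does not depend on $\lambda$.

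I do not anticipate any genuine obstacle: the heavy lifting is entirely done by the stochastic dominance corollary, and the only conceptual move is recognizing that decomposing $|\gam|^k$ into a non-decreasing and a non-increasing piece lets us exploit the two-sided interval by comparing the right tail to $\overline{\lambda}$ and the left tail to $\underline{\lambda}$. The only minor care required is in the sign flip when passing from the non-decreasing function $-h_-$ back to $h_-$, and in noting that both $\lambda \le \overline{\lambda}$ and $\lambda \ge \underline{\lambda}$ hold throughout $[\underline{\lambda},\overline{\lambda}]$ so that both comparisons are available simultaneously.
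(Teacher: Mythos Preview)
Your proposal is correct and follows essentially the same argument as the paper: decompose $|\gam|^k$ into $\max(0,\gam)^k + \max(0,-\gam)^k$, use stochastic dominance (the corollary of Lemma~\ref{lemma:StochasticDominance}) to bound the non-decreasing piece by its value at $\overline{\lambda}$ and the non-increasing piece by its value at $\underline{\lambda}$, then dominate each piece by $|\gam|^k$. Your treatment of the sign flip for $h_-$ via $-h_-$ is slightly more explicit than the paper's, but the content is identical.
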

\begin{proof}
    We have
    \begin{align}
        &\sup\limits_{\lambda \in [\underline{\lambda}, \overline{\lambda}]}\Exp_{Q^{\lambda}_{X|y}}\left[ \left|\gam(X|y)\right|^k\right]\\
        &=\sup\limits_{\lambda \in [\underline{\lambda}, \overline{\lambda}]} \Exp_{Q^{\lambda}_{X|y}}\left[ \max\left( 0, \gam(X|y) \right)^k + \max\left( 0, -\gam(X|y) \right)^k\right]\\
        &\le \sup\limits_{\lambda \in [\underline{\lambda}, \overline{\lambda}]} \Exp_{Q^{\lambda}_{X|y}}\left[ \max\left( 0, \gam(X|y) \right)^k \right] + \sup\limits_{\lambda \in [\underline{\lambda}, \overline{\lambda}]} \Exp_{Q^{\lambda}_{X|y}}\left[\max\left( 0, -\gam(X|y) \right)^k\right].
    \end{align}
    We can then apply the result from Lemma \ref{lemma:StochasticDominance} to complete the proof
    \begin{align}
        &\sup\limits_{\lambda \in [\underline{\lambda}, \overline{\lambda}]}\Exp_{Q^{\lambda}_{X|y}}\left[ \left|\gam(X|y)\right|^k\right]\\
        &\le \Exp_{Q^{\overline{\lambda}}_{X|y}}\left[ \max\left( 0, \gam(X|y) \right)^k \right] + \Exp_{Q^{\underline{\lambda}}_{X|y}}\left[\max\left( 0, -\gam(X|y) \right)^k\right]\\
        &\le \Exp_{Q^{\overline{\lambda}}_{X|y}}\left[ \left|\gam(X|y) \right| ^k \right] + \Exp_{Q^{\underline{\lambda}}_{X|y}}\left[\left|\gam(X|y) \right| ^k\right].
    \end{align}
\end{proof}

\subsection{Proofs of Lemmas}

\subsubsection{Lemma \ref{lemma:TypicalityProbBound}} \label{section:TypicalityProbBound}
\begin{proof}
    By Lemma \ref{lemma:MomentBounds}, we know that $\Exp_Y\Exp_{Q^\lambda_{X|Y}}\left[ \left| \gam(X|Y) \right|^k | Y \right] < \infty$ for every $k \in \{1,2,3,4,5,6\}$. Hence, each of the atypical sets defined in (\ref{eq:CodeClose}) - (\ref{eq:FiniteNetRaw}) are concentration events for random variables with finite mean and variance and therefore, their probabilities can be bounded using Chebyshev's inequality. For $\Delta^{(n)}_\epsilon$, we can obtain
\begin{align}
    P_{Y^n,Z}(\Delta^{(n)}_\epsilon) &\le \frac{\Var\left[\Exp_X\left[ \sum\limits_{i=1}^n\gam(X_i|Y_i)|Y^n, Z \right]\right]}{n^2\epsilon^2}\\
    &\le \frac{\Var\left[ \gam \right]}{n\epsilon^2}, \label{eq:Chebyshev2}
\end{align}
and for each $\Delta^{(n)}_{j, \lambda, \epsilon}$,
\begin{align}
    P_{Y^n,Z}\left( \Delta^{(n)}_{j, \lambda, \epsilon} \right) &\le \frac{\Var\left[\Exp_{Q^\lambda_{X^n|Y^n}}\left[ \sum\limits_{i=1}^n \left|\gam(X_i|Y_i)\right|^j|Y^n \right]\right]}{n^2\epsilon^2} \label{eq:chebBegin}\\
    &\le \frac{ \Var _{P_YQ^\lambda_{X|Y}}\left[ \left|\gam(X_i|Y_i)\right|^j \right] }{n\epsilon^2}\\
    &\le \frac{ \Exp_{P_YQ^\lambda_{X|Y}}\left[ \left|\gam(X_i|Y_i)\right|^{2j} \right] }{n\epsilon^2}.\label{eq:ChebyShev4}
\end{align}
The same bound holds for each $P_{Y^n,Z}\left( \underline{\Delta}^{(n)}_{j, \lambda, \epsilon} \right)$ as well. Applying the union bound then completes the proof.
\end{proof}

\subsubsection{Lemma \ref{lemma:regularity}} \label{section:RegProof}
\begin{proof}
   We first show the existence of the upper bounds --- these follow directly from Lemma \ref{lemma:MomentBounds} and the typicality conditions in (\ref{eq:FiniteNet}). For $k \in \{1,2,3\}$, consider
   \begin{align}
       &\sup\limits_{\lambda \in [\underline{\lambda}, \overline{\lambda}]} \frac{1}{n}\sum\limits_{i=1}^n\Exp_{Q^\lambda_{X|y_i}}\left[ \left| \gam(X|y_i) \right|^k \right] \label{eq:BeginRawMomentBound}\\
       &\le  \frac{1}{n}\sum\limits_{i=1}^n\Exp_{Q^{\underline{\lambda}}_{X|y_i}}\left[ \left| \gam(X|y_i) \right|^k \right] +  \frac{1}{n}\sum\limits_{i=1}^n\Exp_{Q^{\overline{\lambda}}_{X|y_i}}\left[ \left| \gam(X|y_i) \right|^k \right]\\
       &\le \Exp_Y\Exp_{Q^{\underline{\lambda}}_{X|Y}}\left[ \left| \gam(X|Y) \right|^j | Y \right] + \Exp_Y\Exp_{Q^{\overline{\lambda}}_{X|Y}}\left[ \left| \gam(X|Y) \right|^j | Y \right] + 2\epsilon. \label{eq:RawMomentsRegularityUsed}
   \end{align}
   Now, for $\frac{s^2_n}{n}$, we have
   \begin{align}
       &\frac{s^2_n}{n}\\
       &=\frac{1}{n}\sum\limits_{i=1}^n \Var_{Q^{\lambda}_{X|y_i}}\left[  \gam(X|y_i) \right]\\
       &= \frac{1}{n}\sum\limits_{i=1}^n \left( \Exp_{Q^{\lambda}_{X|y_i}}\left[  \left(\gam(X|y_i)\right)^2 \right] - \Exp_{Q^{\lambda}_{X|y_i}}\left[  \gam(X|y_i) \right]^2\right)\\
       &\le \frac{1}{n}\sum\limits_{i=1}^n \left( \Exp_{Q^{\lambda}_{X|y_i}}\left[  \left|\gam(X|y_i)\right|^2 \right] + \Exp_{Q^{\lambda}_{X|y_i}}\left[  \left| \gam(X|y_i) \right| \right]^2\right).
   \end{align}
   Applying (\ref{eq:BeginRawMomentBound}) - (\ref{eq:RawMomentsRegularityUsed}) with $k=1$ and $k=2$ then gives us $\overline{m}_2$. Similarly, for $\frac{1}{n}\sum\limits_{i=1}^n \Lambda_X'''(\lambda,y_i)$, we can obtain
   \begin{align}
       &\frac{1}{n}\sum\limits_{i=1}^n \Lambda_X'''(\lambda,y_i)\\
       &=\frac{1}{n}\sum\limits_{i=1}^n \Exp_{Q^{\lambda}_{X|y_i}}\left[  \left(\gam(X|y_i) - \Exp_{Q^{\lambda}_{X|y_i}}\left[ \gam \right]\right)^3 \right]\\
       &= \frac{1}{n}\sum\limits_{i=1}^n \Bigg( \Exp_{Q^{\lambda}_{X|y_i}}\left[\left(\gam(X|y_i)\right)^3\right] - 3\Exp_{Q^{\lambda}_{X|y_i}}\left[\gam(X|y_i)\right]\Exp_{Q^{\lambda}_{X|y_i}}\left[\left(\gam(X|y_i)\right)^2\right] \nonumber\\
       &\phantom{\sum\limits_{i=1}^n}+ 2\Exp_{Q^{\lambda}_{X|y_i}}\left[\gam(X|y_i)\right]^3\Bigg)\\
       &\le \frac{1}{n} \sum\limits_{i=1}^n \Bigg( \Exp_{Q^{\lambda}_{X|y_i}}\left[\left|\gam(X|y_i)\right|^3\right] + 3\Exp_{Q^{\lambda}_{X|y_i}}\left[\left|\gam(X|y_i)\right|\right]\Exp_{Q^{\lambda}_{X|y_i}}\left[\left|\gam(X|y_i)\right|^2\right] \nonumber\\
       &\phantom{\sum\limits_{i=1}^n}+ 2\Exp_{Q^{\lambda}_{X|y_i}}\left[\left|\gam(X|y_i)\right|\right]^3\Bigg).
   \end{align}
   We can then similarly find $\overline{m}_3$ by applying (\ref{eq:BeginRawMomentBound}) - (\ref{eq:RawMomentsRegularityUsed}). Note that $\overline{m}_3$ is an upper bound for the tilted central third moment as well as $\frac{\muthird}{n}$.

   For the lower bound on $\frac{s^2_n(\lambda, y^n)}{n}$, we will use a continuity-based argument that relies on the fact that $\Exp_Y\Var_{Q^{\eta}_{X|Y}}\left[ \left( \gam(X|Y) \right) | Y \right]$ can be uniformly bounded for $\eta \in [\underline{\lambda}, \overline{\lambda}]$. Let $\hat{\lambda}$ denote the closest point in $K_\epsilon$ to $\lambda$. Then, by Taylor's theorem,
   \begin{align}
       \frac{s^2_n(\lambda, y^n)}{n} &= \frac{1}{n}\sum\limits_{i=1}^n \empLam''(\lambda, y^n)\\
       &= \frac{1}{n}\sum\limits_{i=1}^n \empLam''(\hat{\lambda}, y^n) + \frac{\lambda - \hat{\lambda}}{n}\sum\limits_{i=1}^n \empLam'''(\tilde{\lambda}, y^n),
   \end{align}
   for some $\tilde{\lambda}$ in the interval between $\lambda$ and $\hat{\lambda}$. Using the typicality condition in (\ref{eq:FiniteNet}), and the upper bound on the tilted third central moment $\overline{m}_3$, we can then obtain the following lower bound
   \begin{align}
       &\frac{s^2_n(\lambda, y^n)}{n} \\
       &\ge \Exp_Y\Var_{Q^{\hat{\lambda}}_{X|Y}}\left[ \left( \gam(X|Y) \right) | Y \right] - 2\epsilon - \overline{m}_3\epsilon\\
       &\ge \inf \limits_{\lambda_0 \in [\underline{\lambda}, \overline{\lambda}]} \Exp_Y\Var_{Q^{\lambda_0}_{X|Y}}\left[ \left( \gam(X|Y) \right) | Y \right] - 2\epsilon - \overline{m}_3\epsilon. \label{eq:lbFinal}
   \end{align}
   Choosing $\epsilon$ to be sufficiently small ensures that the expression in (\ref{eq:lbFinal}) is bounded away from $0$, thus obtaining the lower bound $\underline{m}_2$.
\end{proof}

\subsection{Proof of Lemma \ref{lemma:Gibbs}} \label{section:GibbsProof}
\begin{proof}
    Choose $n>\max\left(n_0, \frac{1}{\underline{M} \cdot \underline{\lambda}^2 \cdot \sqrt{\underline{m}_2} \cdot \epsilon}\right)$ (See Definition \ref{def:YucelLB} for $n_0$). Then, from the typicality condition from (\ref{eq:FiniteNetRaw}) for $j=1$ and the choice of $\epsilon$ (See Definition \ref{def:EpsilonChoice}), we obtain
    \begin{align}
        \empLam'(\underline{\lambda}, y^n) < \Exp_Y\Exp_{Q^{\underline{\lambda}}_{X|Y}}\left[ \gam(X|Y) | Y \right] + \epsilon < \imath_{y^n,z} < \Exp_Y\Exp_{Q^{\overline{\lambda}}_{X|Y}}\left[ \gam(X|Y) | Y \right] - \epsilon < \empLam'(\overline{\lambda}, y^n)
    \end{align}
    Hence, we can find $\lambda \in (\underline{\lambda}, \overline{\lambda})$ s.t. $\empLam'(\lambda, y^n) = \imath_{y^n,z}$. Then, we have
    \begin{align}
        \imath_{y^n}(B(y^n,z)) &= \Exp_{p_{X^n}}\left[\gamyn \Bigg| \gamyn \ge \imath_{y^n,z}\right]\\
        &= \frac{\Exp_{p_{X^n}}\left[\gamyn \cdot \mathbf{1}\left( \gamyn \ge \imath_{y^n,z} \right)\right]}{p_{X^n}\left( \gamyn \ge \imath_{y^n,z} \right)}.
    \end{align}
    Transforming to the $\lambda$-tilted measure $Q^\lambda_{X^n|y^n}$ and substituting 
    \begin{align}
        W_n &= \frac{\sum\limits_{i=1}^n \gam(X_i|y_i) - n\imath_{y^n,z}}{s_n},\\
        \psi_n &= \lambda s_n,
    \end{align}
    we obtain
    \begin{align}
        &\frac{\Exp_{p_{X^n}}\left[\gamyn \cdot \mathbf{1}\left( \gamyn \ge \imath_{y^n,z} \right)\right]}{p_{X^n}\left( \gamyn \ge \imath_{y^n,z} \right)}\\
        &= \frac{\Exp_{Q^\lambda_{X^n|y^n}}\left[ \left( \imath_{y^n,z} + \frac{s_nW_n} {n}\right)\exp\left( -\psi_n W_n\right) \mathbf{1}\left( W_n \ge 0 \right)\right]}{ \Exp_{Q^\lambda_{X^n|y^n}}\left[ \exp\left( -\psi_n W_n\right) \mathbf{1}\left( W_n \ge 0\right) \right]} \\
        &= \imath_{y^n,z} + \frac{s_n}{n}\frac{\Exp_{Q^\lambda_{X^n|y^n}}\left[ W_n \exp\left( -\psi_n W_n\right) \mathbf{1}\left( W_n \ge 0 \right)\right]}{\Exp_{Q^\lambda_{X^n|y^n}}\left[ \exp\left( -\psi_n W_n\right)\mathbf{1}\left( W_n \ge 0 \right)\right]}. \label{eq:gibbsSubs}
    \end{align}
    If we denote the law of $W_n$ by $F_n(\cdot)$, we have
    \begin{align}
        &\Exp_{Q^\lambda_{X^n|y^n}}\left[ W_n \exp\left( -\psi_n W_n\right)\mathbf{1}\left( W_n \ge 0 \right)\right]\\ &=\int\limits_{0}^\infty x\exp\left( -\psi_n x\right) dF_n\\
        &=  \int\limits_{0}^\infty \left( \psi_n x\exp\left( -\psi_n x\right) - \exp\left( -\psi_n x\right) \right) F_n(x)dx \label{eq:IntegrateByParts11}\\
        &= \frac{1}{\psi_n} \int\limits_{0}^\infty \left(t\exp\left( -t \right) - \exp(-t) \right) F_n\left( \frac{t}{\psi_n}\right)dt.
    \end{align}
    Here, (\ref{eq:IntegrateByParts11}) follows from integration by parts. Using the Berry-Esseen theorem (see (1), \cite{van1972application}),
    \begin{align}
        &\frac{1}{\psi_n} \int\limits_{0}^\infty \left(t\exp\left( -t \right) - \exp(-t) \right) F_n\left( \frac{t}{\psi_n}\right)dt\\
        &\le \frac{1}{\psi_n} \int\limits_{0}^\infty \left(t\exp\left( -t \right) - \exp(-t) \right) \left( \Phi\left( \frac{t}{\psi_n} \right) + \frac{2\mu_n^{(3)}}{s_n^3} \right)dt\\
        &= \frac{1}{\psi_n} \int\limits_{0}^\infty \left(t\exp\left( -t \right) - \exp(-t) \right) \left( \Phi\left( \frac{t}{\psi_n} \right) - \Phi(0) \right)dt\\
        &\le\frac{1}{\psi_n} \int\limits_{0}^\infty \left(t\exp\left( -t \right) - \exp(-t) \right) \left( \frac{t}{\psi_n}\phi(0) \right)dt\\
        &\le \frac{1}{\psi_n^2}.
    \end{align}
    Substituting this back into (\ref{eq:gibbsSubs}), and using the lower bound from Lemma 1 in \cite{altuug2020exact}, we obtain the required result.
    \begin{align}
        &\imath_{y^n,z} + \frac{s_n}{n}\frac{\Exp_{Q^\lambda_{X^n|y^n}}\left[ W_n \exp\left( -\psi_n W_n\right) \mathbf{1}\left( W_n \ge 0 \right)\right]}{\Exp_{Q^\lambda_{X^n|y^n}}\left[ \exp\left( -\psi_n W_n\right)\mathbf{1}\left( W_n \ge 0 \right)\right]}\\
        &\le \imath_{y^n,z} + \frac{s_n}{n}\frac{\frac{1}{\psi_n^2}}{\underline{M}\frac{1}{\sqrt{n}}}\\
        &\le \imath_{y^n,z} + \frac{1}{\underline{M} \cdot \underline{\lambda}^2 \cdot \sqrt{\underline{m}_2} \cdot n}\\
        &= \imath_{y^n}(A(y^n,z)).
    \end{align}
\end{proof}

\end{document}